\providecommand{\tabularnewline}{\\}
\theoremstyle{plain}
\newtheorem{thm}{\protect\theoremname}[section]
\theoremstyle{plain}
\newtheorem{lem}[thm]{\protect\lemmaname}
\theoremstyle{plain}
\newtheorem{prop}[thm]{\protect\propositionname}
\theoremstyle{plain}
\newtheorem*{question*}{\protect\questionname}
\theoremstyle{remark}
\newtheorem{rem}[thm]{\protect\remarkname}
\theoremstyle{definition}
\newtheorem{defn}[thm]{\protect\definitionname}
\theoremstyle{definition}
\newtheorem{example}[thm]{\protect\examplename}
\providecommand{\definitionname}{Definition}
\providecommand{\examplename}{Example}
\providecommand{\lemmaname}{Lemma}
\providecommand{\propositionname}{Proposition}
\providecommand{\questionname}{Question}
\providecommand{\remarkname}{Remark}
\providecommand{\theoremname}{Theorem}
\begin{document}
\title{A new look at Perturbation Theory in QFT and Resolvent Series}
\author{Raphael Ducatez}
\maketitle
\begin{abstract}
We give a short introduction for elementary mathematical tools used
in the context of Quantum Field Theory. These notes were motivated
by a reading group in Lyon on Talagrand's book «What is Quantum Field
Theory, A First Introduction for Mathematicians» .
\end{abstract}
This project started from a reading group that occurred in 2023-2024
in the probability group in Lyon. For a few months we have been reading
Talagrand's book «What is Quantum Field Theory, A First Introduction
for Mathematicians»\footnote{One very nice thing in this book is the use of many footnotes. The
title is a reference to ``A New Look at Independence'' which is
another paper of Talagrand that I like a lot.} \cite{michel2022quantum} which is, without doubt, a very difficult
subject. The goal of these notes is to give a introduction of the
different concepts presented there but I will try to simplify everything
as much as possible. I focus on Section III of Talagrand's book and
revisit every definition, argument and theorem, replacing the objects
occurring there by general \emph{Finite Dimension Matrices. }Doing
so, I forget most of the physics but this approach is very natural
and should be very useful because
\begin{itemize}
\item If a formula can't be proved for finite dimension matrices, there
is little hope that it can be well understood for more complicated
operators.
\item If an estimate for finite dimension matrices does not depend too much
on the dimension, it should not be too difficult to generalize to
a more general class of operators.
\end{itemize}
What you \emph{will not }find in these notes : infinite particles
system, creation and annihilation operators, diverging integrals...
and physical insight in general. But what you \emph{will }find in
these notes : simple estimates (with short proof) on matrices and
their inverse and relations to Dyson series, (Feynman) diagrams, perturbation
of the spectrum,... 

The main tool that I use through these notes is the resolvent formula\footnote{And when you have an hammer, everything looks like a nail.}.
This approach is actually very common in mathematical physics and
particularly in the random operator community \cite{aizenman2015random,benaych2016lectures,frohlich1983absence}
so one could see these notes as a translation booklet into QFT. For
a last comment, because the computations presented here are mostly
self consistent, I believe that they form a good material to be presented
in exercise sections of a physics course.

\section{Resolvent Series}

We start with an elementary Lemma :
\begin{lem}
\label{lem:resolvent-series}If $\|A^{-1/2}BA^{-1/2}\|<1$ then 
\[
(A+B)^{-1}=\sum_{m=0}^{\infty}(-1)^{m}A^{-1}(BA^{-1})^{m}
\]
\end{lem}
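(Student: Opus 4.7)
The plan is to reduce everything to the standard Neumann (geometric) series for operators of norm less than one, applied after a symmetric rescaling.

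First I would factor $A+B$ symmetrically: writing $C := A^{-1/2}BA^{-1/2}$, one has
\[
A+B = A^{1/2}\bigl(I+C\bigr)A^{1/2},
\]
so that, provided $I+C$ is invertible,
\[
(A+B)^{-1} = A^{-1/2}(I+C)^{-1}A^{-1/2}.
\]
This is the only algebraic manipulation of substance; it requires $A$ to admit an invertible square root, which the hypothesis $\|A^{-1/2}BA^{-1/2}\|<1$ tacitly assumes (think of $A$ as positive definite).

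Next I would invoke the standard Neumann series: since $\|C\|<1$ by hypothesis, the partial sums $\sum_{m=0}^{N}(-1)^{m}C^{m}$ form a Cauchy sequence in operator norm (dominated by a convergent geometric series), and telescoping $(I+C)\sum_{m=0}^{N}(-1)^{m}C^{m} = I - (-C)^{N+1}$ shows the limit equals $(I+C)^{-1}$. Substituting:
\[
(A+B)^{-1} = \sum_{m=0}^{\infty}(-1)^{m}\,A^{-1/2}C^{m}A^{-1/2}.
\]

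Finally I would simplify each term by collapsing the square roots: expanding $C^{m}$ and using $A^{-1/2}A^{-1/2}=A^{-1}$ repeatedly gives
\[
A^{-1/2}C^{m}A^{-1/2} = A^{-1/2}\bigl(A^{-1/2}BA^{-1/2}\bigr)^{m}A^{-1/2} = A^{-1}(BA^{-1})^{m},
\]
which yields the claimed identity. There is no real obstacle here; the only thing to watch is that convergence of the series in the displayed form follows automatically from $\|A^{-1/2}BA^{-1/2}\|<1$ together with the bound $\|A^{-1}(BA^{-1})^{m}\|\le \|A^{-1/2}\|^{2}\|C\|^{m}$, so the series converges absolutely in operator norm and rearrangement is justified.
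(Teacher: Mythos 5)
Your proposal is correct and coincides with the paper's own (second) argument: the paper likewise writes $(A+B)^{-1}=A^{-1/2}\bigl(I+A^{-1/2}BA^{-1/2}\bigr)^{-1}A^{-1/2}$ and expands via the Neumann series, exactly as you do. The only difference is that the paper also records an alternative proof by iterating the resolvent formula, which has the added benefit of an exact finite-order remainder term, but your route is the same as its short proof.
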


\begin{proof}
[Proof of Lemma \ref{lem:resolvent-series}]We have the resolvent
formula 
\begin{equation}
(A+B)^{-1}=A^{-1}-A^{-1}B(A+B)^{-1}.\label{eq:resolventFormula}
\end{equation}
then by iteration for any $k\in\mathbb{N}$ we have 
\begin{align}
(A+B)^{-1} & =\sum_{m=0}^{k-1}(-1)^{m}A^{-1}(BA^{-1})^{m}+(-1)^{k}(A^{-1}B)^{k}(A+B)^{-1}.\label{eq:finite_resolvant_series}
\end{align}
Moreover, if $\|A^{-1/2}BA^{-1/2}\|<1$ we have that
\[
\|(A^{-1}B)^{k}(A+B)^{-1}\|\lesssim\|A^{-1/2}BA^{-1/2}\|^{k}\rightarrow0
\]
 as $k\rightarrow\infty$ which finishes the proof. An interesting
point of this proof is that one can stop at any order and (\ref{eq:finite_resolvant_series})
gives an exact formula. 

Here is another very short proof of the Lemma:
\begin{align*}
(A+B)^{-1} & =A^{-1/2}(I+A^{-1/2}BA^{-1/2})^{-1}A^{-1/2}\\
 & =A^{-1/2}\left(\sum_{m=0}^{\infty}(-1)^{m}(A^{-1/2}BA^{-1/2})^{m}\right)A^{-1/2}.
\end{align*}
\end{proof}
Although Lemma \ref{lem:resolvent-series} looks very simple, our
main message in these notes is the following.
\begin{itemize}
\item This lemma should be seen as one of the most central tool in Perturbation
Theory\footnote{Actually, the resolvent should be seen as the most central tool in
spectral theory, as it is used in the definition of the spectrum and
the proof of the spectral theorem. Notice also the words ``resolvent'',
``Green function'' or ``propagator'' have a very close meaning,
which in the finite dimension case, should just be ``inverse of matrices''.}. Many results and other techniques should just be seen as a particular
case of this Lemma.
\item Issues in Quantum Field Theory such as infinite integrals and the
need of elaborate techniques to deal with them occurs because of the
naive use the Lemma with unbounded operators where $\|A^{-1/2}BA^{-1/2}\|=\infty$.\footnote{The case $\|B\|=\infty$ gives the ``ultraviolet divergence'' while
the case $\|A^{-1}\|=\infty$ gives the ``infrared divergence''.}
\end{itemize}
Here is an important example of the use of Lemma \ref{lem:resolvent-series}.
\begin{prop}
\label{prop:eigenvalue_perturbation}Let $\lambda(\epsilon)$ an eigenvalue
of $A+\epsilon B$. If $\lambda(0)$ is isolated in the spectrum of
$A$ then for small enough $\epsilon$ 
\[
\lambda(\epsilon)=\sum_{k=0}^{\infty}\lambda^{(k)}\epsilon^{k}\,,\quad\lambda^{(k)}=\frac{1}{2i\pi k}\oint_{{\cal C}}\text{Tr}([(z-A)^{-1}B]^{k})dz
\]
for $k\geq1$ where ${\cal C}\subset\mathbb{C}$ is a small circle
surrounding $\lambda(0)$. 
\end{prop}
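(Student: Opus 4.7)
The plan is to express $\lambda(\epsilon)$ as a contour integral, expand the inverse inside by Lemma \ref{lem:resolvent-series}, and simplify the $k$th coefficient using cyclicity of the trace and integration by parts. I assume $\lambda(0)$ is simple, so that by continuity of the spectrum, for $|\epsilon|$ small enough ${\cal C}$ surrounds only the single eigenvalue $\lambda(\epsilon)$ of $A + \epsilon B$. Using the identity $\text{Tr}((z - M)^{-1}) = \sum_i \frac{m_i}{z - \lambda_i(M)}$ together with the residue theorem, one obtains the starting point
$$\lambda(\epsilon) = \frac{1}{2i\pi} \oint_{{\cal C}} z \, \text{Tr}\bigl((z - A - \epsilon B)^{-1}\bigr) \, dz.$$

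I would then apply Lemma \ref{lem:resolvent-series} inside the integral. On the compact curve ${\cal C}$, which is bounded away from $\mathrm{spec}(A)$, the quantity $\|(z-A)^{-1}\|$ is uniformly bounded, so for $|\epsilon|$ small the series
$$(z - A - \epsilon B)^{-1} = \sum_{k=0}^{\infty} \epsilon^k \bigl[(z-A)^{-1} B\bigr]^k (z-A)^{-1}$$
converges uniformly on ${\cal C}$ and may be inserted term by term. The $k = 0$ contribution reproduces $\lambda(0) = \lambda^{(0)}$, while for $k \geq 1$ one is left with
$$\lambda^{(k)} = \frac{1}{2i\pi} \oint_{{\cal C}} z \, \text{Tr}\bigl([(z-A)^{-1} B]^k (z-A)^{-1}\bigr) \, dz.$$

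The final and most important step is to eliminate the spurious factor of $z$ and the trailing $(z-A)^{-1}$. Using cyclicity of the trace, all $k$ terms arising from the Leibniz rule for $\frac{d}{dz}[(z-A)^{-1} B]^k$ collapse onto the same expression under the trace, yielding
$$\frac{d}{dz} \text{Tr}\bigl([(z-A)^{-1} B]^k\bigr) = -k \, \text{Tr}\bigl([(z-A)^{-1} B]^k (z-A)^{-1}\bigr).$$
Since $\text{Tr}([(z-A)^{-1}B]^k)$ is single-valued in a neighborhood of ${\cal C}$, integration by parts on the closed contour produces no boundary term: $\oint z f'(z)\,dz = -\oint f(z)\,dz$, which immediately gives the claimed formula. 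I expect the cyclicity identity to be the substantive point of the proof; the residue formula and the uniform convergence of the series are standard, and the integration by parts is routine once single-valuedness is checked.
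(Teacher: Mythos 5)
Your proof is correct and follows essentially the same route as the paper: the Cauchy/trace contour representation of $\lambda(\epsilon)$, term-by-term expansion via Lemma \ref{lem:resolvent-series}, and then cyclicity of the trace plus integration by parts on the closed contour to remove the factor $z$ and the extra resolvent. Your added remarks (uniform convergence on ${\cal C}$, single-valuedness, and the explicit simplicity assumption on $\lambda(0)$, which the paper leaves implicit) are sensible clarifications rather than a different argument.
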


For $A$ a diagonal matrix and $\lambda(0)=\lambda_{i}$, the proposition
gives the usual first terms $\lambda^{(0)}=\lambda(0)$, $\lambda^{(1)}=B_{ii}$,
$\lambda^{(2)}=\sum_{j\neq i}\frac{|B_{ij}|^{2}}{\lambda_{i}-\lambda_{j}}$
but also allows the computation of the largest terms without too much
pain. See Example \ref{exa:lambda-4}.
\begin{proof}
[Proof of Proposition \ref{prop:eigenvalue_perturbation}]Because
$\lambda(0)$ is isolated for small enough $\epsilon$, and with the
Cauchy formula we have
\[
\lambda(\epsilon)=\frac{1}{2i\pi}\oint_{{\cal C}}z\text{Tr}((z-A-\epsilon B)^{-1})dz.
\]
Using Lemma \ref{lem:resolvent-series} we obtain a series $\lambda(\epsilon)=\sum_{k=0}^{\infty}\lambda^{(k)}\epsilon^{k}$
with 
\begin{align*}
\lambda^{(k)} & =\frac{1}{2i\pi}\oint_{{\cal C}}z\text{Tr}\left([(z-A)^{-1}B]^{k}(z-A)^{-1}\right)dz\\
 & =-\frac{1}{2ki\pi}\oint_{{\cal C}}z\frac{d}{dz}\text{Tr}\left([(z-A)^{-1}B]^{k}\right)dz\\
 & =\frac{1}{2ki\pi}\oint_{{\cal C}}\text{Tr}\left([(z-A)^{-1}B]^{k}\right)dz.
\end{align*}
where we use the cyclic property of the trace and an integration by
part. 
\end{proof}
Therefore, these notes are driven by the following question.
\begin{question*}
Can we express the different mathematical objects that appear in \cite{michel2022quantum}
into the form $(A+B)^{-1}$ so that we can apply Lemma \ref{lem:resolvent-series}
?
\end{question*}

\section{Dyson Series}

\label{sec:Dyson-Series}

\subsection{Matrix Exponential Series}

\label{subsec:Matrix-Exponential-Series}In Talagrand's and other
QFT textbooks, the standard way to obtain a series for the scattering
matrix is to start from the definition and directly write a series
for the exponential. We will not use this approach too much afterward,
but we check that it is indeed equivalent to Lemma \ref{lem:resolvent-series}.
\begin{lem}
\label{lem:(Matrix-Exponential-Series)}(Matrix Exponential Series)
For any $t\geq0$ we have
\begin{equation}
e^{t(A+B)}=\sum_{m=0}^{\infty}\int_{\mathbb{R}_{+}^{m},\,t_{0}+\cdots+t_{m}=t}e^{t_{m}A}B\cdots e^{t_{2}A}Be^{t_{1}A}Be^{t_{0}A}dt_{0}\cdots dt_{m-1}.\label{eq:Exp-series}
\end{equation}
\end{lem}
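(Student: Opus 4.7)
The plan is to obtain (\ref{eq:Exp-series}) from Lemma \ref{lem:resolvent-series} by taking Laplace transforms: the Laplace transform turns convolutions on the time side into products of resolvents on the spectral side, so the resolvent expansion of Lemma \ref{lem:resolvent-series} becomes, under inverse transform, exactly a sum of simplex integrals of exponentials. This is the cleanest way to make the ``equivalence with Lemma \ref{lem:resolvent-series}'' explicit, as the paper requests.

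Concretely, I would first fix $\mathrm{Re}(z) > \|A\|+\|B\|$, so that
\begin{equation*}
\int_0^\infty e^{-zt} e^{t(A+B)}\, dt = (z-A-B)^{-1}.
\end{equation*}
Applying Lemma \ref{lem:resolvent-series} with the substitution $(A,B) \mapsto (z-A,-B)$, the hypothesis $\|(z-A)^{-1/2} B (z-A)^{-1/2}\| < 1$ holds for $\mathrm{Re}(z)$ large enough and yields
\begin{equation*}
(z-A-B)^{-1} = \sum_{m=0}^\infty (z-A)^{-1} \bigl[B(z-A)^{-1}\bigr]^m.
\end{equation*}
Next, since $(z-A)^{-1} = \int_0^\infty e^{-zt} e^{tA}\,dt$ and since the Laplace transform of a convolution is the product of Laplace transforms, the $m$-th summand on the right-hand side is the Laplace transform of
\begin{equation*}
t \mapsto \int_{t_0+\cdots+t_m=t,\, t_i \geq 0} e^{t_m A} B e^{t_{m-1}A} B \cdots B e^{t_0 A}\, dt_0 \cdots dt_{m-1}.
\end{equation*}
Injectivity of the Laplace transform (on continuous matrix-valued functions of at most exponential growth) then identifies $e^{t(A+B)}$ with the sum of these simplex integrals, which is (\ref{eq:Exp-series}).

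The main obstacle is justifying the interchange of the inverse Laplace transform with the infinite sum. This reduces to an absolute-convergence estimate: the $m$-th simplex integrand is uniformly bounded by $\|B\|^m e^{t\|A\|}$ and the simplex has volume $t^m/m!$, so on every compact time interval the series converges absolutely and uniformly, which legitimizes the swap. As an aside, an entirely time-side proof is available by iterating Duhamel's identity $e^{t(A+B)} = e^{tA} + \int_0^t e^{(t-s)A} B\, e^{s(A+B)}\, ds$ and bounding the remainder via the same factor $t^m/m!$; this alternative bypasses the Laplace transform but obscures the direct link with Lemma \ref{lem:resolvent-series}.
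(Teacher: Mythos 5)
Your proof is correct, but it takes a genuinely different route from the paper's. The paper proves the lemma entirely on the time side: it iterates the Duhamel identity $e^{t(A+B)}=e^{tA}+\int_{0}^{t}e^{(t-s)(A+B)}Be^{sA}\,ds$ and bounds the $k$-th remainder by $\frac{t^{k}}{k!}\|B\|^{k}e^{t(\|A\|+\|B\|)}$ --- exactly the alternative you mention in your closing aside. Your argument instead deduces the exponential series from Lemma \ref{lem:resolvent-series} via the Laplace transform, i.e.\ it runs in the opposite direction to the computation the paper sketches right after this lemma (where Lemma \ref{lem:resolvent-series} is recovered from Lemma \ref{lem:(Matrix-Exponential-Series)} by a Fourier transform in the time variable). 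This has the merit of making the advertised equivalence between the two series explicit, at the price of invoking injectivity of the Laplace transform and an interchange of the infinite sum with the transform, which you correctly reduce to the $t^{m}/m!$ simplex-volume bound. Two small points to watch. First, applying Lemma \ref{lem:resolvent-series} to $(z-A,-B)$ requires a square root of $z-A$ for a general non-normal $A$ and complex $z$; for $\mathrm{Re}(z)$ large this is fine (principal branch, $\|(z-A)^{-1/2}\|$ of order $|z|^{-1/2}$), but it is cleaner to observe that the Neumann condition $\|(z-A)^{-1}B\|<1$, which clearly holds for $\mathrm{Re}(z)>\|A\|+\|B\|$, already guarantees convergence of the resolvent series without fractional powers. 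Second, unlike the Duhamel iteration, your route does not produce an exact finite-order identity with explicit remainder and does not extend to time-dependent perturbations $B(t)$ as in Remark \ref{rem:time-dependent}; that extra flexibility is part of what the paper's time-side proof buys.
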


\begin{lem}
(Dyson Series)\label{lem:(Dyson-Series)} With\footnote{The solution of the ``Heisenberg evolution'' : $i\frac{d}{ds}\widetilde{B}(s)=[A,\widetilde{B}(s)]$.}
$\widetilde{B}(s):=e^{isA}Be^{-isA}$ we have
\begin{equation}
e^{itA}e^{-it(A+B)}=\sum_{m=0}^{\infty}(-i)^{m}\int_{0\leq t_{0}\leq\cdots\leq t_{m-1}\leq t}\widetilde{B}(t_{m-1})\cdots\widetilde{B}(t_{1})\widetilde{B}(t_{0})dt_{0}\cdots dt_{m-1}\label{eq:Dyson-Series}
\end{equation}
\begin{proof}
[Proof of Lemma \ref{lem:(Matrix-Exponential-Series)} and \ref{lem:(Dyson-Series)}]
For the first equality, one iterates the Duhamel formula 
\begin{equation}
e^{t(A+B)}=e^{tA}+\int_{0}^{t}e^{(t-s)(A+B)}Be^{sA}ds\label{eq:Duhamel}
\end{equation}
to obtain 
\begin{align*}
e^{t(A+B)} & =\sum_{m=0}^{k-1}\int_{\mathbb{R}_{+}^{m+1},\,t_{0}+\cdots+t_{m}=t}e^{t_{m}A}B\cdots e^{t_{2}A}Be^{t_{1}A}Be^{t_{0}A}dt_{0}\cdots dt_{m-1}\\
 & \quad+\int_{\mathbb{R}_{+}^{k+1},\,t_{0}+\cdots+t_{k}=t}e^{t_{k}(A+B)}B\cdots Be^{t_{1}A}Be^{t_{0}A}dt_{0}\cdots dt_{k-1}.
\end{align*}
The last term satisfies
\[
\left\Vert \int_{\mathbb{R}_{+}^{k+1},\,t_{0}+\cdots+t_{k}=t}e^{t_{k}(A+B)}B\cdots Be^{t_{1}A}Be^{t_{0}A}dt_{0}\cdots dt_{k-1}\right\Vert \leq\frac{t^{k}}{k!}\|B\|^{k}e^{t(\|A\|+\|B\|)}
\]
and converges to $0$ as $k\rightarrow\infty$. For the second equality,
we denote $T_{k}=\sum_{j=0}^{k}t_{j}$ so that 
\begin{align*}
e^{-it_{m}A}B\cdots Be^{-it_{1}A}Be^{-it_{0}A} & =e^{i(T_{m-1}-T_{m})A}B\cdots Be^{i(T_{2}-T_{1})A}Be^{iA(T_{0}-T_{1})}Be^{-iT_{0}A}\\
 & =e^{-itA}\widetilde{B}(T_{m-1})\cdots\widetilde{B}(T_{1})\widetilde{B}(T_{0})
\end{align*}
Then with $T_{m}=t$ and (\ref{eq:Exp-series}) we obtain the Dyson
series .
\end{proof}
\end{lem}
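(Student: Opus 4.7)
The plan is to derive the Dyson series from Lemma~\ref{lem:(Matrix-Exponential-Series)} by a change of variables and an exponential regrouping. First, I apply Lemma~\ref{lem:(Matrix-Exponential-Series)} with $A,B$ replaced by $-iA,-iB$ (valid for $t\ge 0$); the $m$ factors of $-iB$ produce the prefactor $(-i)^m$, while the exponentials absorb the $-i$ into their arguments, giving a series for $e^{-it(A+B)}$ indexed by the simplex $t_0+\cdots+t_m=t$, $t_j\ge 0$. Multiplying on the left by $e^{itA}$ and using $t=\sum_j t_j$ collapses the leftmost exponentials, $e^{itA}e^{-it_m A}=e^{iT_{m-1}A}$, where $T_k:=\sum_{j=0}^{k}t_j$.

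The main step is then the algebraic identity
\[
e^{iT_{m-1}A}\,B\,e^{-it_{m-1}A}\,B\,e^{-it_{m-2}A}\cdots B\,e^{-it_0 A}
=\widetilde{B}(T_{m-1})\widetilde{B}(T_{m-2})\cdots\widetilde{B}(T_0),
\]
which I prove by successively inserting $I=e^{-iT_k A}e^{iT_k A}$. Writing $e^{iT_{m-1}A}Be^{-it_{m-1}A}=\widetilde{B}(T_{m-1})\,e^{iT_{m-2}A}$ peels off the leftmost factor and leaves a shorter product of exactly the same shape; iterating down to $T_0=t_0$ (where $e^{iT_0 A}Be^{-it_0 A}=\widetilde{B}(T_0)$) finishes it. This is the step I expect to be the main obstacle, not conceptually but as pure index bookkeeping.

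For the last step, I change variables $(t_0,\ldots,t_{m-1})\mapsto(T_0,\ldots,T_{m-1})$, an affine map of Jacobian $1$. Eliminating $t_m=t-T_{m-1}\ge 0$, the original simplex transforms into the ordered simplex $\{0\le T_0\le T_1\le\cdots\le T_{m-1}\le t\}$. Relabeling $T_k$ as $t_k$ yields the Dyson series exactly as stated.

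As a self-contained alternative I would set $U(t):=e^{itA}e^{-it(A+B)}$ and verify by direct differentiation that $U'(t)=-i\,\widetilde{B}(t)U(t)$ with $U(0)=I$; iterating the Duhamel formula associated with this time-dependent ODE, exactly as in the proof of Lemma~\ref{lem:(Matrix-Exponential-Series)}, produces the series term by term, with convergence of the remainder controlled by $\|\widetilde{B}(s)\|=\|B\|$ and the volume of the ordered simplex.
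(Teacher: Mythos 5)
Your proposal is correct and follows essentially the same route as the paper: apply the matrix exponential series (Lemma~\ref{lem:(Matrix-Exponential-Series)}) with $-iA,-iB$, telescope the exponentials into $\widetilde{B}(T_k)$ factors via the cumulative times $T_k=\sum_{j\le k}t_j$, and pass to the ordered simplex $0\le T_0\le\cdots\le T_{m-1}\le t$ --- you merely make explicit the unit-Jacobian change of variables that the paper leaves implicit. Your closing alternative (differentiating $U(t)=e^{itA}e^{-it(A+B)}$ and iterating Duhamel for the time-dependent generator $-i\widetilde{B}(t)$) is also sound and matches the spirit of Remark~\ref{rem:time-dependent}, but it is not needed.
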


\begin{rem}
\label{rem:time-dependent}We can also write perturbation series to
solve the Schrodinger equation with a non constant perturbation $B(t)$,
that is, to compute $U(s,t)$ for $s\leq t$ that satisfies
\[
U(s,s)=I_{d},\quad i\frac{d}{dt}U(s,t)=(A+B(t))U(s,t).
\]
Indeed, one can use 
\[
U(0,t)=e^{-itA}-i\int_{0}^{t}U(s,t)B(s)e^{-isA}ds
\]
and obtain a similar formula as (\ref{eq:Exp-series}) or (\ref{eq:Dyson-Series}).
\end{rem}

\subsection{Resolvent and exponential series, a few remarks}

The resolvent and matrix exponential series should be seen as equivalent
formula. For example, if the following series converge, one can recover
Lemma \ref{lem:resolvent-series} from Lemma \ref{lem:(Matrix-Exponential-Series)}
with the Fourier transform on the time variable\footnote{In \cite[Section 13.10]{michel2022quantum} we see the word «Fourier
transform» a lot. If $A$ is translation invariant, the Fourier transform
on space variable is used to diagonalize $A$. For example, for $A=-\Delta$
we have $-\widehat{\Delta f}(p)=p^{2}\widehat{f}(p)$. }
\begin{align*}
(A+B+i\tau)^{-1} & =i\int_{0}^{\infty}e^{it(A+B)-\tau t}dt\\
 & =\sum_{m=0}^{\infty}i^{m}\int_{\mathbb{R}_{+}^{m}}e^{it_{0}(A+i\tau)}B\cdots Be^{it_{m}(A+i\tau)}dt_{0}\cdots dt_{m}\\
 & =i\sum_{m=0}^{\infty}(-1)^{m}(A+i\tau)^{-1}[B(A+i\tau)^{-1}]^{m}.
\end{align*}

We also mention a generalization of the proof of Proposition \ref{prop:eigenvalue_perturbation}.
\begin{lem}
For any holomorphic function $f$ with a well-defined Fourier transform
$\widehat{f}$ we have
\[
f(A+B)=\frac{1}{2i\pi}\oint_{{\cal C}}f(z)(A+B-z)^{-1}dz=\frac{1}{2\pi}\int_{\mathbb{R}}\widehat{f}(\theta)e^{i\theta(A+B)}d\theta
\]
where ${\cal C}$ is a loop surrounding the spectrum of $A+B$.
\end{lem}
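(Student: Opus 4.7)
The plan is to reduce both equalities to the scalar case using the spectral decomposition of $M := A+B$. Since diagonalizable matrices are dense in $M_n(\mathbb{C})$ and both sides of each identity depend continuously on $M$ for fixed $f$ and ${\cal C}$ (as long as ${\cal C}$ avoids the spectrum, which is stable under small perturbations), it suffices to prove the result for diagonalizable $M$ and extend by continuity; alternatively one can work directly with the Jordan form, where the resolvent $(M-z)^{-1}$ still has explicit partial-fraction structure.

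So assume $M=\sum_{j}\lambda_{j}P_{j}$ with spectral projections $P_{j}$. Then $(M-z)^{-1}=\sum_{j}(\lambda_{j}-z)^{-1}P_{j}$ and $e^{i\theta M}=\sum_{j}e^{i\theta\lambda_{j}}P_{j}$. For the first equality, pulling the spectral decomposition through the contour integral reduces it to
\[
\frac{1}{2i\pi}\oint_{{\cal C}}f(z)(M-z)^{-1}dz=\sum_{j}P_{j}\cdot\frac{1}{2i\pi}\oint_{{\cal C}}\frac{f(z)}{\lambda_{j}-z}dz=\sum_{j}P_{j}f(\lambda_{j})=f(M),
\]
by the scalar Cauchy integral formula applied inside ${\cal C}$ (up to the usual sign convention for the orientation of ${\cal C}$; this is nothing but the standard Dunford--Riesz holomorphic functional calculus). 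For the second equality, the same manipulation combined with the scalar Fourier inversion $f(\lambda_{j})=\frac{1}{2\pi}\int_{\mathbb{R}}\widehat{f}(\theta)e^{i\theta\lambda_{j}}d\theta$ gives $\frac{1}{2\pi}\int_{\mathbb{R}}\widehat{f}(\theta)e^{i\theta M}d\theta=\sum_{j}P_{j}f(\lambda_{j})=f(M)$.

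If one prefers a direct bridge between the two integrals without passing through $f(M)$, one can substitute the Fourier representation $f(z)=\frac{1}{2\pi}\int_{\mathbb{R}}\widehat{f}(\theta)e^{i\theta z}d\theta$ into the contour integral and swap the order of integration:
\[
\frac{1}{2i\pi}\oint_{{\cal C}}f(z)(M-z)^{-1}dz=\frac{1}{2\pi}\int_{\mathbb{R}}\widehat{f}(\theta)\left[\frac{1}{2i\pi}\oint_{{\cal C}}e^{i\theta z}(M-z)^{-1}dz\right]d\theta,
\]
and recognize the inner bracket as $e^{i\theta M}$ by the first equality applied to the entire function $z\mapsto e^{i\theta z}$. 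The one non-routine point is the justification of this swap: on the compact contour ${\cal C}$ one has $|e^{i\theta z}|\leq e^{|\theta|\max_{{\cal C}}|\Im z|}$, so Fubini requires $\widehat{f}$ to decay fast enough to absorb this exponential growth. This is exactly the content of the implicit hypothesis that $f$ is holomorphic in a neighborhood of ${\cal C}$ \emph{and} admits a well-defined Fourier transform: it forces a Paley--Wiener-type compatibility between the width of the holomorphy strip of $f$ and the decay of $\widehat{f}$. Everything else is linear algebra and Cauchy's theorem.
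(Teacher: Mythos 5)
Your proof is correct and is essentially the argument the paper has in mind: the lemma is stated there without proof (offered as a generalization of the Cauchy-formula argument behind Proposition \ref{prop:eigenvalue_perturbation}), and your reduction to the scalar case via spectral projections is exactly the Riesz--Dunford calculus for the first equality plus Fourier inversion for the second, with the Fubini/Paley--Wiener caveat you raise being the right way to read the hypothesis ``holomorphic with a well-defined Fourier transform.'' Your orientation remark is also apt: with the resolvent written as $(A+B-z)^{-1}$ rather than $(z-A-B)^{-1}$, the counterclockwise contour integral gives $-f(A+B)$, so the sign or orientation in the statement itself needs adjusting --- that is a slip in the statement, not in your argument.
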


Therefore, having perturbation series for $(A+B-z)^{-1}$ or for $e^{i\theta(A+B)}$
can both be used to obtain perturbation series for other properties
of $A+B$. We could still mention pro and cons for the different formula. 

\noindent For the resolvent series:
\begin{itemize}
\item The resolvent series can only be used for time independent $A,B$.
\item For $A$ is Hermitian, $\tau\in\mathbb{R}_{+}^{*}$, $(A+i\tau)^{-1}$
is always well-defined but with norm only bounded by $\|(A+i\tau)^{-1}\|\leq\tau^{-1}$. 
\item One can use elementary algebraic tools to manipulate $(A+B)^{-1}$.
\end{itemize}
For the exponential series:
\begin{itemize}
\item The exponential series can be used for finite time evolution or with
time dependent perturbation $B(t)$.
\item From the proof of Lemma \ref{lem:(Matrix-Exponential-Series)} one
can see that the remaining terms is only bounded by $\frac{t^{k}}{k!}\|B\|^{k}$
which is an issue in the regime $t\rightarrow\infty$.
\item In statistical physics, one may be interested in computing $\text{Tr}(e^{-\beta(A+B)})$
and to write the exponential series. This is fine in the high temperature
regime $\|\beta B\|\ll1$.
\end{itemize}

\section{Scattering Matrix}

\label{sec:Scattering-Matrix}Here we revisit \cite[Section 12]{michel2022quantum}.

\subsection{Scattering Matrix}

We recall the informal definition of the \emph{Scattering Matrix}
\[
S=\lim_{t\rightarrow\infty}e^{-itA}e^{i2t(A+B)}e^{-itA}
\]
The limit does not necessary exist and actually in the case of finite
dimension matrices it can't exist because of Poincare's recursion
theorem. However if the dimension is not too small, because Poincare's
recursion theorem is valid only for extremely large time, and we can
still assume that for $t$ large (but not extremely large) $e^{-itA}e^{i2t(A+B)}e^{-itA}$
is well approximated by a scattering matrix $S$. One can write the
entries of $S$ as a resolvent. 

\begin{lem}
\label{lem:scattering-resolvent}Let $\lambda_{i},\lambda_{j}$ eigenvalues
of $A$ and $v_{i},v_{j}$ the associated eigenvectors. If the following
limit exists, then 
\[
S_{ij}:=\lim_{t\rightarrow\infty}\langle v_{i},e^{-itA}e^{i2t(A+B)}e^{-itA}v_{j}\rangle=\lim_{\tau\rightarrow0^{+}}i\tau\langle v_{i},(2(A+B)-\lambda_{i}-\lambda_{j}+i\tau)^{-1}v_{j}\rangle.
\]
\end{lem}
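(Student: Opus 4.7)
The plan is to turn both sides of the identity into a common object, namely a Laplace-type integral of the time-dependent matrix element $g(t):=\langle v_i,e^{-itA}e^{i2t(A+B)}e^{-itA}v_j\rangle$, and then use an Abelian theorem to pass from the time limit to the resolvent limit.

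First I would simplify the time-dependent side. Because $v_i,v_j$ are eigenvectors of $A$, the outer exponentials act as scalars: $e^{-itA}v_j=e^{-it\lambda_j}v_j$ and $\langle v_i,e^{-itA}=e^{-it\lambda_i}\langle v_i,$, hence
\[
g(t)=e^{-it(\lambda_i+\lambda_j)}\langle v_i,e^{i2t(A+B)}v_j\rangle=\langle v_i,e^{itM}v_j\rangle,\qquad M:=2(A+B)-\lambda_i-\lambda_j.
\]
Since $M$ is Hermitian, $\|e^{itM}\|=1$, so $|g(t)|\le 1$ for every $t\ge 0$.

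Next I would use exactly the Fourier-Laplace representation of the resolvent already employed in Subsection 2.2 of the excerpt: for $\tau>0$,
\[
i(M+i\tau)^{-1}=\int_0^{\infty}e^{itM}e^{-\tau t}\,dt.
\]
Sandwiching this against $v_i,v_j$ gives
\[
i\tau\langle v_i,(2(A+B)-\lambda_i-\lambda_j+i\tau)^{-1}v_j\rangle=\tau\int_0^{\infty}g(t)\,e^{-\tau t}\,dt.
\]
Thus the identity to be proved reduces to the statement that, whenever the limit $S_{ij}=\lim_{t\to\infty}g(t)$ exists, one has $\lim_{\tau\to 0^+}\tau\int_0^{\infty}g(t)e^{-\tau t}\,dt=S_{ij}$.

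Finally I would establish this last step by the standard change of variables $s=\tau t$:
\[
\tau\int_0^{\infty}g(t)e^{-\tau t}\,dt=\int_0^{\infty}g(s/\tau)\,e^{-s}\,ds.
\]
For each fixed $s>0$, $g(s/\tau)\to S_{ij}$ as $\tau\to 0^+$ by hypothesis, and $|g(s/\tau)e^{-s}|\le e^{-s}\in L^1(\mathbb{R}_+)$, so dominated convergence yields $\int_0^{\infty}S_{ij}e^{-s}\,ds=S_{ij}$, which is exactly the right-hand side.

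The only non-cosmetic point is this Abelian passage from a genuine time limit to an Abel/Laplace average; it is routine in finite dimension thanks to the uniform bound $|g|\le 1$, but it is worth noting that in infinite dimension or without assuming existence of the time limit this step is delicate (a true Tauberian argument would be required), which is consistent with the lemma's hypothesis that the limit defining $S_{ij}$ already exists.
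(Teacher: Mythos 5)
Your proof is correct and takes essentially the same route as the paper's: replace the $t\to\infty$ limit by its Abel/Laplace average, absorb the outer exponentials $e^{-itA}$ into the scalar phases $e^{-it\lambda_i},e^{-it\lambda_j}$, and identify $\int_0^\infty e^{itM}e^{-\tau t}\,dt$ with $i(M+i\tau)^{-1}$. The only difference is that you spell out the Abelian step (dominated convergence after the substitution $s=\tau t$, using $\|e^{itM}\|=1$), which the paper simply invokes as convergence of the ``Cesaro limit''.
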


\begin{proof}
[Proof of Lemma \ref{lem:scattering-resolvent}]If the limit exists,
then we also have convergence as a Cesaro limit 
\begin{align*}
\lim_{t\rightarrow\infty}\langle v_{i},e^{-itA}e^{i2t(A+B)}e^{-itA}v_{j}\rangle & =\lim_{\tau\rightarrow0^{+}}\tau\int_{0}^{\infty}e^{-\tau t}\langle v_{i},e^{-itA}e^{i2t(A+B)}e^{-itA}v_{j}\rangle dt\\
 & =\lim_{\tau\rightarrow0^{+}}\tau\langle v_{i},\left(\int_{0}^{\infty}e^{it(2(A+B)-\lambda_{i}-\lambda_{j}+i\tau)}dt\right)v_{j}\rangle\\
 & =\lim_{\tau\rightarrow0^{+}}i\tau\langle v_{i},(2(A+B)-\lambda_{i}-\lambda_{j}+i\tau)^{-1}v_{j}\rangle
\end{align*}
\end{proof}
We therefore have a simple looking formula for the scattering matrix,
and we are then able to use Lemma \ref{lem:resolvent-series} to obtain
a perturbation series. Denoting $S_{ij}(\tau)$ the r.h.s term of
Lemma \ref{lem:scattering-resolvent} and such that $S_{ij}=\lim_{\tau\rightarrow0^{+}}S_{ij}(\tau)$.
We have the following series
\begin{prop}
\label{prop:scattering_series}With $\lambda_{\tau}:=\frac{\lambda_{i}+\lambda_{j}}{2}-i\tau$,
if $\|(A-\lambda_{\tau})^{-1/2}B(A-\lambda_{\tau})^{-1/2}\|<1$ then
\[
S_{ij}(\tau)=\sum_{k=0}^{\infty}S_{ij}^{(k)}(\tau)\,,\quad S_{ij}^{(k)}(\tau)=i\tau\langle v_{i},(A-\lambda_{\tau})^{-1}[B(\lambda_{\tau}-A)^{-1}]^{k}v_{j}\rangle.
\]
\end{prop}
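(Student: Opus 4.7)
The plan is to apply Lemma \ref{lem:resolvent-series} directly to the resolvent appearing in the definition of $S_{ij}(\tau)$. First I would rewrite the operator $2(A+B)-\lambda_i-\lambda_j+i\tau$ as a scalar multiple of $(A-\lambda_\tau)+B$, so that the inverse we need becomes a constant times $((A-\lambda_\tau)+B)^{-1}$. The latter is exactly of the form $(\widetilde A+\widetilde B)^{-1}$ to which Lemma \ref{lem:resolvent-series} applies, with $\widetilde A = A-\lambda_\tau$ and $\widetilde B = B$.

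Next, the hypothesis $\|(A-\lambda_\tau)^{-1/2}B(A-\lambda_\tau)^{-1/2}\|<1$ is precisely the convergence condition of Lemma \ref{lem:resolvent-series} under this substitution, so I obtain the norm-convergent expansion
\[
((A-\lambda_\tau)+B)^{-1} = \sum_{m=0}^{\infty}(-1)^m(A-\lambda_\tau)^{-1}\bigl[B(A-\lambda_\tau)^{-1}\bigr]^m.
\]
To match the form stated in the proposition I would then distribute the sign $(-1)^m$ into the $m$ resolvent factors on the right, using $-(A-\lambda_\tau)^{-1}=(\lambda_\tau-A)^{-1}$, which rewrites the general term as $(A-\lambda_\tau)^{-1}\bigl[B(\lambda_\tau-A)^{-1}\bigr]^m$.

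Finally, I would take the inner product $\langle v_i,\cdot\,v_j\rangle$ of both sides and multiply by $i\tau$ (keeping track of the overall scalar introduced when pulling a factor of $2$ out of the original resolvent). Norm convergence of the series justifies exchanging the sum with the inner product, and the result is the claimed term-by-term formula for $S_{ij}^{(k)}(\tau)$.

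I do not anticipate a serious obstacle. The only point that deserves care is the interpretation of $(A-\lambda_\tau)^{-1/2}$ when $A-\lambda_\tau$ is not self-adjoint (since $\lambda_\tau\notin\mathbb{R}$). However, for $\tau>0$ the spectrum of $A-\lambda_\tau$ stays bounded away from the origin, so the square root is well defined via the holomorphic functional calculus; alternatively, one can bypass the issue by following the second (short) proof of Lemma \ref{lem:resolvent-series}, which only relies on the convergence of a Neumann-type series and goes through verbatim in this complex setting.
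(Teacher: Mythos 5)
Your proposal is correct and takes essentially the same route as the paper, whose proof is the one-liner ``apply Lemma \ref{lem:resolvent-series} directly'' to the resolvent from Lemma \ref{lem:scattering-resolvent}, the stated norm condition being exactly the hypothesis of that lemma. One small caveat (shared by the paper's own statement): $2(A+B)-\lambda_i-\lambda_j+i\tau=2\bigl[(A-\lambda_{\tau/2})+B\bigr]$, not $2\bigl[(A-\lambda_{\tau})+B\bigr]$, so the exact expansion produces $\lambda_{\tau/2}$ and a prefactor $i\tau/2$ rather than $\lambda_{\tau}$ and $i\tau$; this rescaling of $\tau$ is immaterial in the limit $\tau\rightarrow0^{+}$.
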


\begin{proof}
[Proof of Proposition \ref{prop:scattering_series}]We can directly
apply Lemma \ref{lem:resolvent-series} 
\end{proof}
For example here are the first orders terms of the series
\begin{align*}
S_{ij}^{(0)}(\tau) & =i\tau\langle v_{i},(A-\lambda_{\tau})^{-1}v_{j}\rangle=1_{i=j}\\
S_{ij}^{(1)}(\tau) & =-i\tau\langle v_{i},(A-\lambda_{\tau})^{-1}B(A-\lambda_{\tau})^{-1}v_{j}\rangle=\frac{i\tau}{\frac{1}{4}(\lambda_{i}-\lambda_{j})^{2}+\tau^{2}}\langle v_{i},Bv_{j}\rangle\\
S_{ij}^{(2)}(\tau) & =-i\tau\langle v_{i},(A-\lambda_{\tau})^{-1}B(A-\lambda_{\tau})^{-1}B(A-\lambda_{\tau})^{-1}v_{j}\rangle\\
 & =\frac{i\tau}{\frac{1}{4}(\lambda_{i}-\lambda_{j})^{2}+\tau^{2}}\sum_{k}\frac{B_{ik}B_{kj}}{\lambda_{k}-\lambda_{\tau}}
\end{align*}
where we assume that $A$ is diagonal. For larger terms $\ell\geq2$
we get \footnote{This gives a similar expression as \cite[Theorem 12.4.1]{michel2022quantum}.}
\begin{equation}
S_{ij}^{(\ell)}(\tau)=-\frac{i\tau}{\frac{1}{4}(\lambda_{i}-\lambda_{j})^{2}+\tau^{2}}\sum_{k_{1},\cdots,k_{\ell-1}}\frac{B_{ik_{1}}B_{k_{1}k_{2}}\cdots B_{k_{m-1}j}}{\prod_{i=1}^{\ell-1}(\lambda_{k_{i}}-\lambda_{\tau})}\label{eq:Scattering_term}
\end{equation}

\section{Symmetry}

\label{sec:Symmetry}

Symmetries play a very important role in physics and in QFT imply
conservation of many quantities. Here for finite matrices, we keep
the usual definition of symmetry that appears in quantum mechanics,
and it will be good enough for the rest of these notes.
\begin{defn}
A matrix $A$ is said $U$-symmetric if $U^{-1}AU=A$. 
\end{defn}

That is $A$ and $U$ commute. The main message here is if we have
a global symmetry $U$, then it is enough to restrict ourselfves to
each of its eigenspace.
\begin{lem}
\label{lem:symmetry}If $A_{1},\cdots,A_{k}$ commute with $U$, then
for any function $f:\mathbb{R}^{k}\rightarrow\mathbb{C}$ and eigenvalue
$\lambda$ of $U$, the associated eigenspace $E_{\lambda}$ is $f(A_{1},\cdots,A_{k})$-stable.
Moreover, we have
\[
f(A_{1},\cdots,A_{k})|_{E_{\lambda}}=f(A_{1}|_{E_{\lambda}},\cdots,A_{k}|_{E_{\lambda}}).
\]
\end{lem}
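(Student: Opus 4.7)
The plan is to reduce both claims to the single-operator version of commutation and then extend from polynomials to general functions via the resolvent. First I would observe the elementary fact that if a single operator $A$ commutes with $U$ and $v\in E_\lambda$, then $U(Av)=AUv=\lambda(Av)$, so $Av\in E_\lambda$. Applied to each $A_i$, this shows immediately that $E_\lambda$ is $A_i$-stable, so every $A_i|_{E_\lambda}$ is a well-defined endomorphism of $E_\lambda$.

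Second, I would handle the case where $f$ is a (possibly non-commutative) polynomial. Iterating the one-step stability, any monomial $A_{i_1}A_{i_2}\cdots A_{i_m}$ maps $E_\lambda$ to itself, and acting on a vector in $E_\lambda$ gives literally the same vector whether one uses $A_{i_\ell}$ or $A_{i_\ell}|_{E_\lambda}$ at each step. By linearity this yields both the stability of $E_\lambda$ and the identity $f(A_1,\ldots,A_k)|_{E_\lambda}=f(A_1|_{E_\lambda},\ldots,A_k|_{E_\lambda})$ for any polynomial $f$.

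Third, to reach ``any function $f$'' I would lean on the philosophy of these notes: every matrix function we actually care about is built out of resolvents, either directly (e.g.\ $(A+B)^{-1}$) or through a Cauchy integral $\frac{1}{2i\pi}\oint_{\cal C}f(z)(P(A_1,\ldots,A_k)-z)^{-1}dz$ for some polynomial expression $P$ of the $A_i$'s. The polynomial case shows that $P(A_1,\ldots,A_k)$ preserves $E_\lambda$; therefore, when the inverse exists, the resolvent $(P-z)^{-1}$ also preserves $E_\lambda$ (the inverse of a block-diagonal operator is block-diagonal on the same decomposition), and its restriction to $E_\lambda$ is the resolvent of the restriction. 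Integrating along ${\cal C}$ preserves both properties, which gives the lemma for all $f$ in this functional calculus.

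The main obstacle is interpretive rather than technical: the $A_i$'s are not assumed to commute pairwise, so ``$f(A_1,\ldots,A_k)$'' for an arbitrary function $f:\mathbb{R}^k\to\mathbb{C}$ has no canonical meaning. The cleanest way out is to read the statement through the functional calculus implicitly used throughout these notes (polynomials, Dyson/resolvent series, contour integrals of resolvents), for which the reduction above works verbatim. Any sensible definition of $f(A_1,\ldots,A_k)$ that agrees with polynomials on polynomials and is continuous in the appropriate sense will then satisfy the lemma by a density argument on $E_\lambda$, which has finite dimension so no analytical subtleties arise.
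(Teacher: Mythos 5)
Your proof is correct and follows essentially the same route as the paper: prove stability and the restriction identity on monomials and polynomials using commutation with $U$, then extend to general $f$ by approximation (the paper says ``functions well approximated by polynomials''), which your resolvent/Cauchy-integral functional-calculus argument makes concrete. Your observation that $f(A_{1},\cdots,A_{k})$ needs an agreed interpretation when the $A_{i}$ do not commute is a fair point that the paper leaves implicit.
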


\begin{proof}
Consider the case $f=P$ a polynomial. If $U$ commutes with $A_{1},\cdots,A_{k}$
then it commutes with every monomial in $A_{1},\cdots,A_{k}$ and
then commutes with $P(A_{1},\cdots,A_{k})$. For $v\in E_{\lambda}$
we have $UP(A_{1},\cdots,A_{k})v=P(A_{1},\cdots,A_{k})Uv=\lambda P(A_{1},\cdots,A_{k})v$
so $P(A_{1},\cdots,A_{k})v\in E_{\lambda}$. The second point of the
Lemma can also be directly checked on monomials and then extended
to polynomials and finally to functions well approximated by polynomials.
\end{proof}
We can use this lemma to improve Lemma \ref{lem:resolvent-series}.
\begin{prop}
\label{prop:serie-symmetry}Assume $A$ and $B$ commute with $U$.
Let $v_{i},v_{j}$ two eigenvectors of $A$ such that $v_{i}\in E_{\lambda}$
and $v_{j}\in E_{\lambda'}$ two eigenspace of $U$. Then 
\[
(A+B)_{ij}^{-1}=0\quad\text{if }\lambda\neq\lambda'
\]
and if $A$ is diagonal 
\[
(A+B)_{ij}^{-1}=\sum_{\ell=0}^{\infty}(-1)^{\ell}\sum_{k_{1},\cdots,k_{\ell-1}\in E_{\lambda}}\frac{B_{ik_{1}}B_{k_{1}k_{2}}\cdots B_{k_{m-1}j}}{\lambda_{i}\prod_{i=1}^{\ell}\lambda_{k_{i}}}\quad\text{if }\lambda=\lambda'.
\]
where we denote $k_{\ell}=j$. 
\end{prop}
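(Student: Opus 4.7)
The plan is to use the symmetry Lemma \ref{lem:symmetry} to reduce the inversion of $A+B$ to each $U$-eigenspace separately, and then feed the restricted operator into the resolvent series of Lemma \ref{lem:resolvent-series}.

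For the vanishing claim, since $A$ and $B$ both commute with $U$, so does $A+B$, and hence so does $(A+B)^{-1}$. By Lemma \ref{lem:symmetry} this means that $(A+B)^{-1}$ stabilizes every $U$-eigenspace $E_\mu$. Consequently $(A+B)^{-1} v_j \in E_{\lambda'}$, so its coefficient along $v_i \in E_\lambda$ must vanish whenever $\lambda \neq \lambda'$ (working in a basis that respects the decomposition $\bigoplus_\mu E_\mu$, which is always available since $A$ commutes with $U$).

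For the explicit series, fix $\lambda = \lambda'$. Again by Lemma \ref{lem:symmetry}, the restrictions $A|_{E_\lambda}$ and $B|_{E_\lambda}$ are well-defined operators on $E_\lambda$, and since $A+B$ stabilizes $E_\lambda$ its inverse does as well, giving
\[
(A+B)^{-1}\big|_{E_\lambda} = \bigl(A|_{E_\lambda} + B|_{E_\lambda}\bigr)^{-1}.
\]
I then apply Lemma \ref{lem:resolvent-series} to this restricted operator (the norm hypothesis is inherited from the full space, since the restricted operator has smaller norm). This produces a sum of terms of the form $A|_{E_\lambda}^{-1}\bigl(B|_{E_\lambda} A|_{E_\lambda}^{-1}\bigr)^\ell$. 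Because $A$ is diagonal in the ambient basis, $A|_{E_\lambda}$ is simply diagonal on the subset of basis vectors lying in $E_\lambda$. Expanding the matrix product in this basis, the intermediate indices $k_1,\ldots,k_{\ell-1}$ therefore range only over indices whose eigenvectors lie in $E_\lambda$, and the diagonal entries of $A|_{E_\lambda}^{-1}$ contribute precisely the denominator $\lambda_i \lambda_{k_1}\cdots \lambda_{k_{\ell-1}} \lambda_j$ with the convention $k_\ell = j$.

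The only mildly subtle point is the commutation of restriction and inversion, but this is immediate from the fact that $A+B$ preserves $E_\lambda$, so $(A+B)|_{E_\lambda}$ is itself an invertible endomorphism of $E_\lambda$ agreeing with $(A+B)^{-1}|_{E_\lambda}$. Everything else is a direct rewriting of Lemma \ref{lem:resolvent-series} inside the subspace $E_\lambda$.
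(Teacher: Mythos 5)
Your proof is correct and follows essentially the same route as the paper: use Lemma \ref{lem:symmetry} to see that $(A+B)^{-1}$ preserves each $U$-eigenspace (giving the vanishing of the off-block entries) and that inversion commutes with restriction to $E_{\lambda}$, then apply the resolvent series of Lemma \ref{lem:resolvent-series} to $A|_{E_{\lambda}}+B|_{E_{\lambda}}$ and expand in the diagonal basis. Your added remarks on the norm hypothesis being inherited by the restriction and on the basis respecting the decomposition are harmless elaborations of the paper's terser argument.
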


\begin{proof}
[Proof of Proposition \ref{prop:serie-symmetry}]With Lemma \ref{lem:symmetry},
we directly have that $(A+B)^{-1}v_{i}\in E_{\lambda}$ and for the
second formula it is
\[
(A+B)^{-1}|_{E_{\lambda}}=(A|_{E_{\lambda}}+B|_{E_{\lambda}})^{-1}=\sum_{\ell=0}^{\infty}(A|_{E_{\lambda}})^{-1}(-B|_{E_{\lambda}}(A|_{E_{\lambda}})^{-1})^{\ell}.
\]
\end{proof}
The improvement in Proposition \ref{prop:serie-symmetry} is that
the condition $k_{1},\cdots,k_{\ell-1}\in E_{\lambda}$ could reduce
considerably the number of indices in the sum.

\section{Diagrams}

\label{sec:Diagrams}

\subsection{Diagrams}

\label{subsec:Diagrams}Here we consider a system with the following
hypotheses. We introduce them such that we don't have to talk about
Fock space, creation and annihilation operators or Wick Theorem, but
it should be enough for the rest of these notes.
\begin{itemize}
\item There exists an orthonormal basis indexed by $|k\rangle=|p_{1},\cdots,p_{l}\rangle$
with $p_{i}\in{\cal P}\subset\mathbb{R}^{d}$. 
\item $A$ is diagonal in this basis and $A|p_{1},\cdots,p_{l}\rangle=(\sum\omega_{p_{i}})|p_{1},\cdots,p_{l}\rangle$. 
\item There exists $U$ that commutes with $A$ and $B$ and such that $U|p_{1},\cdots,p_{l}\rangle=(\sum p_{i})|p_{1},\cdots,p_{l}\rangle$.
\end{itemize}
We will define the diagrams\footnote{Actually, this definition is not equivalent to the usual Feynman diagrams.
But it looks similar and is used to compute the same quantity. The
main difference is that we stay in the Schrodinger formulation and
we do not aim for Lorentz invariant formulas.} as the following drawings. For any $k_{1},\cdots k_{\ell}$ such
that $B_{k_{1}k_{2}}\cdots B_{k_{\ell-1}k_{\ell}}\ne0$ 
\begin{itemize}
\item Draw $\ell-1$ dots labels by $\{1,\cdots,\ell-1\}$
\item For every $p$ that appears in one of the $|k_{i}\rangle$ draw a
line. Make the line start from the dot $i$ if $p$ appears in $k_{i+1}$
but does not appear in $k_{i}$ and end on the dot $j$ if if $p$
appears in $k_{j}$ but does not appear in $k_{j+1}$.
\end{itemize}
\begin{figure}
\begin{centering}
\includegraphics[width=9cm]{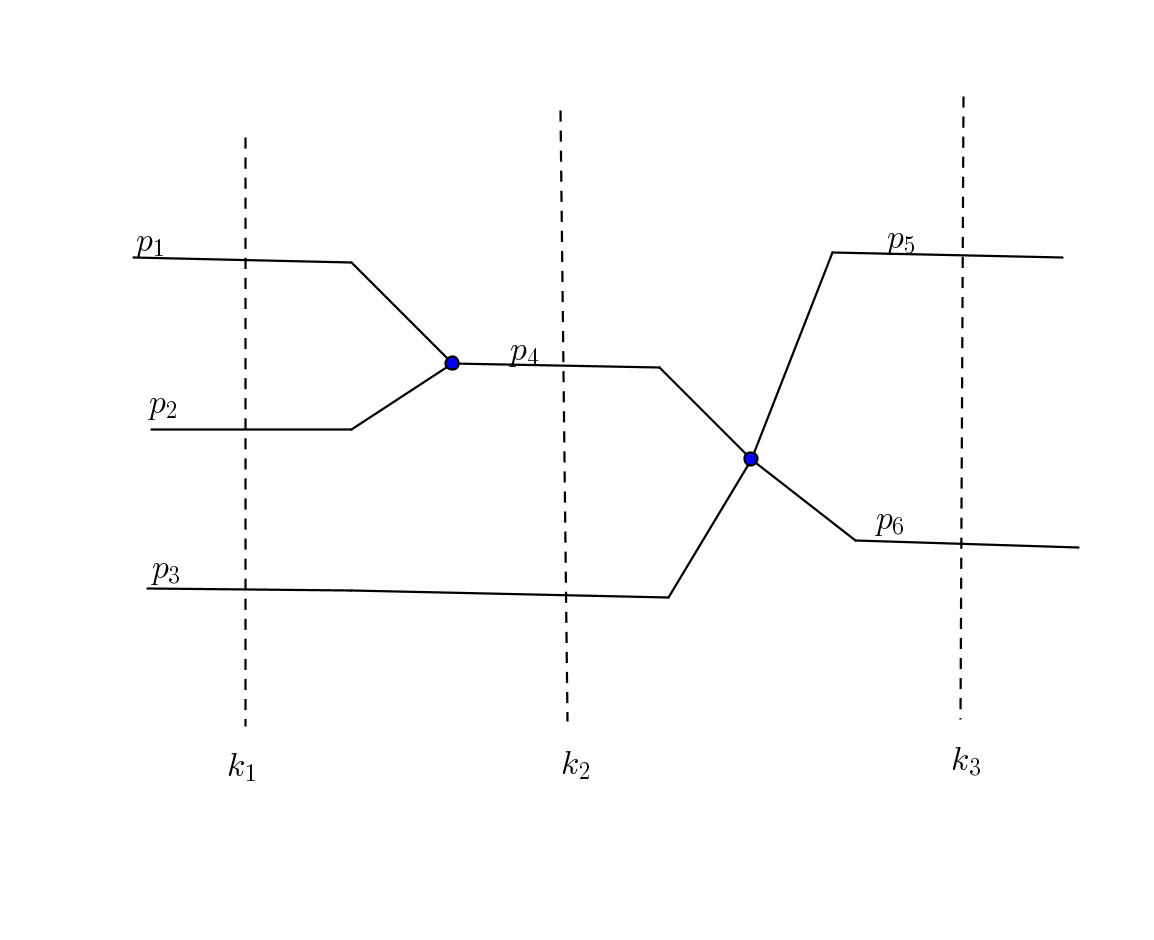}\caption{Example of Diagram to compute $\frac{B_{k_{1}k_{2}}B_{k_{2}k_{3}}}{\lambda_{k_{2}}-\lambda_{\tau}}$
with $k_{1}=|p_{1},p_{2},p_{3}\rangle$, $k_{2}=|p_{3},p_{4}\rangle$,
$k_{3}=|p_{5},p_{6}\rangle$,}
\par\end{centering}
\end{figure}

To compute terms of the scattering matrix, we group the terms $k_{1},\cdots k_{\ell}$
that have the same diagrams. Denoting ${\cal T}$ such diagrams, equation
(\ref{eq:Scattering_term}) becomes 
\begin{align}
S_{ij}^{(\ell)}(\tau) & =\sum_{{\cal T}}\frac{-i\tau}{\frac{1}{4}(\lambda_{i}-\lambda_{j})^{2}+\tau^{2}}\sum_{[k_{1},\cdots,k_{\ell-1}]\sim{\cal T}}\frac{B_{ik_{1}}B_{k_{1}k_{2}}\cdots B_{k_{\ell-1}j}}{\prod_{i=1}^{\ell-1}(\lambda_{k_{i}}-\lambda_{\tau})}\label{eq:Scattering_Term_Diagram}
\end{align}

\begin{defn}
(Diagram value) We denote $S_{ij}^{(\ell,{\cal T})}(\tau)$ the r-h-s
term of (\ref{eq:Scattering_Term_Diagram}) and such that 
\[
S_{ij}^{(\ell)}(\tau)=\sum_{{\cal T}}S_{ij}^{(\ell,{\cal T})}(\tau).
\]
\end{defn}

\subsection{Tree Diagram}

We give here a short justification of the well known fact in QFT :
if the Feynman diagram is a tree, then the computation is not too
problematic. 

\begin{prop}
If the diagram ${\cal T}$ is a tree, there exists at most only one
set $k_{1},\cdots,k_{\ell-1}$ such that $[k_{1},\cdots,k_{\ell-1}]\sim{\cal T}$
and $B_{ik_{1}}B_{k_{1}k_{2}}\cdots B_{k_{\ell-1}j}\neq0$ and 
\[
S_{ij}^{(\ell,{\cal T})}(\tau)=\frac{-i\tau}{\frac{1}{4}(\lambda_{i}-\lambda_{j})^{2}+\tau^{2}}\frac{B_{ik_{1}}B_{k_{1}k_{2}}\cdots B_{k_{\ell-1}j}}{\prod_{i=1}^{\ell-1}(\lambda_{k_{i}}-\lambda_{\tau})}
\]
\end{prop}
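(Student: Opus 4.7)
The plan is to show that once the diagram $\mathcal{T}$ is a tree, the intermediate sequence $(k_1,\ldots,k_{\ell-1})$ compatible with it and yielding a nonzero weight is determined uniquely. Granted this uniqueness, the stated formula follows at once from (\ref{eq:Scattering_Term_Diagram}): the sum over $[k_1,\ldots,k_{\ell-1}]\sim\mathcal{T}$ collapses to its unique term.

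The central input will be momentum conservation at each dot, obtained from the symmetry $U$. Because $B$ commutes with $U$, any nonzero matrix element $B_{kk'}$ forces the total momenta of $|k\rangle$ and $|k'\rangle$ to agree. In diagrammatic terms, this says that at every dot of $\mathcal{T}$ the sum of momenta on the lines ending there (particles destroyed at that interaction) equals the sum on the lines starting there (particles created). Each dot thus contributes one linear constraint among the momenta on its incident lines. The lines attached to the fixed external states $|k_0\rangle=|i\rangle$ and $|k_\ell\rangle=|j\rangle$ carry prescribed external momenta, while the lines joining two dots carry the unknown internal momenta.

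Now I will use the tree hypothesis in the standard way. In a tree, removing any internal edge $e$ disconnects $\mathcal{T}$ into two components; summing the momentum-conservation relations over all dots in one component causes every other internal edge to cancel in pairs, and what remains is an expression of the momentum on $e$ as a signed sum of the external momenta on that side of the cut. Applying this cut to each internal edge determines all internal momenta. Since each intermediate state $k_i$ is, by construction of the diagram, the multiset of momenta carried by the lines alive between dots $i$ and $i+1$, and since those momenta are now pinned down, each $k_i$ is uniquely determined as well. Hence at most one sequence $(k_1,\ldots,k_{\ell-1})\sim\mathcal{T}$ can give a nonzero product $B_{ik_1}B_{k_1k_2}\cdots B_{k_{\ell-1}j}$.

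The step I expect to require the most care is purely bookkeeping: making precise the notions of internal versus external lines in this time-ordered convention (where the external data sit at $k_0$ and $k_\ell$ rather than at a common external vertex), and taking care of degenerate cases where several particles may share the same momentum at the same time, which can require a labeling convention before "same diagram" becomes a strict graph isomorphism. Once the conventions are fixed, the argument above is just the familiar fact that tree-level Feynman graphs carry no undetermined loop momenta.
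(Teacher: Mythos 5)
Your proof is correct and follows essentially the same route as the paper: the $U$-symmetry forces momentum conservation at each dot, and the tree structure then determines all internal momenta, hence the intermediate states $k_1,\ldots,k_{\ell-1}$, uniquely, so the sum in (\ref{eq:Scattering_Term_Diagram}) collapses to a single term. The only cosmetic difference is that you solve the resulting linear system by cutting an internal edge and summing the vertex relations over one component, whereas the paper peels the tree iteratively from its leaves; both are standard ways of saying a tree carries no undetermined loop momenta.
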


\begin{proof}
Because of the symmetry with $U$, we have $k_{1},\cdots,k_{\ell-1}\in E_{P}$
where $E_{P}$ is the eigenspace of $U$ associated with the $|i\rangle$,
that is $U|i\rangle=P|i\rangle$. Then we have to find all sets $k_{1},\cdots,k_{\ell-1}$
that satisfies
\[
[k_{1},\cdots,k_{\ell-1}]\sim{\cal T}\quad,\text{and}\quad\forall i\,\,U|k_{i}\rangle=P|k_{i}\rangle
\]
Because ${\cal T}$ is a tree, there is $\ell-1$ internal edges,
which means that $\ell-1$ unknown terms $(p_{j})_{j\leq\ell-1}$
appear in the vectors $|k_{1}\rangle,\cdots,|k_{\ell-1}\rangle$.
We also have $\ell$ linear equations
\[
\sum_{p_{j}\in|k_{i}\rangle}p_{j}=P
\]
and then for every vertex $x$ in the graph
\begin{equation}
\sum_{p_{j}\text{ finishing at }x}p_{j}-\sum_{p_{j}\text{ starting at }x}p_{j}=0.\label{eq:vertex_condition}
\end{equation}
This linear system can be solved by iteration starting from the leaves
of the tree so that one obtains a unique solution. 

Remark that we actually have one extra equation ($\ell>\ell-1$).
This is because $i\in E_{p}$, and $j\in E_{p'}$ with $P=P'$. In
the case $P\neq P'$ the above system of equation has no solution.
\end{proof}
\begin{rem}
With (\ref{eq:vertex_condition}) we have that if the diagram is disconnected
then the condition $P=P'$ holds for every connected component.
\end{rem}

\section{Integral formula}

\label{sec:Integral-formula}

\subsection{Product space}

Product spaces is an important feature of QFT. Here we give a few
computational results that are used later when we discuss Feynman
diagrams. We consider a space on the form $E=E_{1}\otimes\cdots\otimes E_{n}$
and operators of the form
\begin{equation}
A=A_{1}\otimes I\otimes\cdots\otimes I+I\otimes A_{2}\otimes\cdots\otimes I+\cdots+I\otimes\cdots\otimes I\otimes A_{n}\label{eq:Free_operator}
\end{equation}
We have the following 
\begin{itemize}
\item Denote $v_{m,i_{m}},\lambda_{m,i_{m}}$ the eigenvectors and eigenvalues
of $A_{m}$, $1\leq m\leq n$. Then $A$ is diagonal for the basis
$v_{1,i_{1}}\otimes\cdots\otimes v_{n,i_{n}}$and 
\[
A\left(v_{1,i_{1}}\otimes\cdots\otimes v_{n,i_{n}}\right)=\left(\sum_{m=1}^{n}\lambda_{m,i_{m}}\right)v_{1,i_{1}}\otimes\cdots\otimes v_{n,i_{n}}
\]
\item For any $t\in\mathbb{R}$, $e^{itA}=e^{itA_{1}}\otimes e^{itA_{2}}\cdots\otimes e^{itA_{n}}$.
\end{itemize}
The first point is clear, and for the second one, taking the derivation
on $t$, we obtain (\ref{eq:Free_operator}). We now state a lemma
that we use to make a link between (\ref{eq:Scattering_term}) and
more usual Feynman integrals.

\begin{lem}
\label{lem:inverse_product_space}Let $A$ as in (\ref{eq:Free_operator}).
For any $\omega\in\mathbb{R}$, $\epsilon>0$ 
\begin{align*}
 & (A-\omega+in\epsilon)^{-1}\\
 & =\frac{1}{(2i\pi)^{n-1}}\int_{\omega_{1}+\cdots\omega_{n}=\omega}(A_{1}-\omega_{1}+i\epsilon)^{-1}\otimes\cdots\otimes(A_{n}-\omega_{n}+i\epsilon)^{-1}d\omega_{1}\cdots d\omega_{n}
\end{align*}
and 
\begin{align*}
 & \frac{A+in\epsilon}{(A+in\epsilon)^{2}-\omega^{2}}\\
 & =\frac{1}{(i\pi)^{n-1}}\int_{\omega_{1}+\cdots\omega_{n}=\omega}\frac{A_{1}+i\epsilon}{(A_{1}+i\epsilon)^{2}-\omega_{1}^{2}}\otimes\cdots\otimes\frac{A_{n}+i\epsilon}{(A_{n}+i\epsilon)^{2}-\omega_{1}^{2}}d\omega_{1}\cdots d\omega_{n}
\end{align*}
\end{lem}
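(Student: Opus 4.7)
The plan is to establish the first identity by a Laplace (time-integral) representation of the resolvent combined with the tensor factorization $e^{itA} = e^{itA_1}\otimes\cdots\otimes e^{itA_n}$ recorded just above, and to deduce the second identity from the first by partial fractions.

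\textbf{First identity.} For each Hermitian $A_j$ and $\epsilon > 0$ one has
\[
(A_j - \omega_j + i\epsilon)^{-1} = -i\int_0^\infty e^{it_j(A_j - \omega_j + i\epsilon)}\, dt_j.
\]
Inserting these into the right-hand side, the tensor product becomes an $n$-fold time integral with phase $e^{-i\sum_j t_j\omega_j}$, damping $e^{-\epsilon\sum_j t_j}$, and operator tensor $\bigotimes_j e^{it_j A_j}$. I would parameterize the hyperplane by $(\omega_1,\ldots,\omega_{n-1})$ with $\omega_n = \omega - \sum_{j<n}\omega_j$ and rewrite $\sum_j t_j\omega_j = t_n\omega + \sum_{j<n}(t_j - t_n)\omega_j$. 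The Fourier identity $\int e^{-is\omega_j}\,d\omega_j = 2\pi\delta(s)$ then integrates out $\omega_1,\ldots,\omega_{n-1}$ and yields $\prod_{j<n} 2\pi\delta(t_j - t_n)$, forcing all $t_j$ to a common value $t$. Using $e^{itA} = \bigotimes_j e^{itA_j}$, the surviving single time integral is precisely $-i\int_0^\infty e^{it(A - \omega + in\epsilon)}dt = (A - \omega + in\epsilon)^{-1}$, and tracking the constants (the $(-i)^n$ from the Laplace representations, the $(2\pi)^{n-1}$ from Fourier inversion, and the $(2i\pi)^{n-1}$ in the denominator) closes the identity.

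\textbf{Second identity.} I would apply the partial fraction
\[
\frac{B + in\epsilon}{(B + in\epsilon)^2 - \omega^2} = \tfrac{1}{2}\bigl[(B - \omega + in\epsilon)^{-1} + (B + \omega + in\epsilon)^{-1}\bigr]
\]
with $B = A$ on the left-hand side and, inside each tensor factor, with $B = A_j$ on the right-hand side. Passing both sides to their Laplace representations gives, on the left, $-i\int_0^\infty e^{itA}\cos(\omega t)e^{-n\epsilon t}dt$ and on the right a tensor of single-variable cosine integrals $-i\int_0^\infty e^{it_j A_j}\cos(\omega_j t_j)e^{-\epsilon t_j}dt_j$. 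Expanding $2\cos(\omega_j t_j) = e^{i\omega_j t_j} + e^{-i\omega_j t_j}$ and integrating out $(\omega_1,\ldots,\omega_{n-1})$ exactly as in the first identity produces $2^n$ Fourier-delta constraints $s_j t_j = s_n t_n$ indexed by a sign vector $s\in\{\pm 1\}^n$; crucially, because $t_j \ge 0$, these are satisfiable only when all $s_j$ agree, so the $2^n - 2$ mixed-sign terms vanish and the two surviving uniform-sign contributions reconstruct the left-hand side cosine integral with the claimed normalization $(i\pi)^{n-1}$.

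The main obstacle is the bookkeeping of prefactors and signs --- the $(-i)^n$'s from the Laplace step, the $(2\pi)^{n-1}$ from Fourier inversion versus the $(2i\pi)^{n-1}$ in the denominator, and various $(-1)^{n-1}$ Jacobians under $\omega_j \mapsto -\omega_j$. Analytic convergence is immediate from the strict damping $e^{-\epsilon\sum t_j}$ and Fubini, and the only nontrivial combinatorial input is the positivity of the $t_j$'s which eliminates mixed-sign cross terms in the second identity.
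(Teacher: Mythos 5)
Your proof is correct and is essentially the paper's own argument: the paper likewise represents each tensor factor as the Fourier transform of $f_k(t)=e^{i\lambda_k t-\epsilon t}$ on $\mathbb{R}_+$ (and, for the second identity, of $e^{i\lambda_k|t|-\epsilon|t|}$ on $\mathbb{R}$, which is exactly your partial-fraction/cosine kernel with the positivity of the $t_j$'s encoded in the $|t|$), and then applies the convolution theorem on an eigenbasis --- i.e.\ your delta-function computation run in the reverse (time-to-frequency) direction, with your operator-level use of $e^{itA}=e^{itA_1}\otimes\cdots\otimes e^{itA_n}$ replacing the paper's ``check on eigenvectors and extend by linearity''. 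One caveat: if you actually track the constants along your route you obtain $(-2i\pi)^{n-1}$ and $(-i\pi)^{n-1}$ rather than the stated $(2i\pi)^{n-1}$ and $(i\pi)^{n-1}$ (check $n=2$ by residues), but the paper's own computation produces the same $(-1)^{n-1}$ discrepancy with its statement, so this is a sign slip in the lemma's normalization rather than a gap in your approach --- just do not assert that the bookkeeping ``closes'' the identity as written without verifying it.
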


\begin{proof}
[Proof of Lemma \ref{lem:inverse_product_space}]Let $v$ an eigenvector
of $A$ of the form $v_{1}\otimes\cdots\otimes v_{n}$ with eigenvalue
$\sum\lambda_{i}$. We denote $f_{k}(t)=e^{i\lambda_{k}t-\epsilon t}$
on $\mathbb{R}_{+}$ and $\hat{f}_{k}(\omega)=\frac{i}{(\lambda_{k}-\omega)+i\epsilon}$
its Fourier transform. We have 
\begin{align*}
\frac{i}{\sum_{i=1}^{n}\lambda_{k}-\omega+in\epsilon} & =\left(\widehat{\prod_{k=1}^{n}f_{k}}\right)(\omega)\\
 & =\frac{1}{(2\pi)^{n-1}}\left(\hat{f}_{1}\star\cdots\star\hat{f}_{n}\right)(\omega)\\
 & =\frac{i}{(2i\pi)^{n-1}}\int_{\omega_{1}+\cdots\omega_{n}=\omega}\prod_{k=1}^{n}\frac{1}{(\lambda_{k}-\omega)+i\epsilon}d\omega_{1}\cdots d\omega_{n}
\end{align*}
Therefore $(A-\omega)^{-1}v$ is equal to the right-hand side of the
lemma applied on $v$. By linearity we obtain the first equality of
the Lemma.

For the second equality, we can write the same proof replacing $f_{k}$
as $f_{k}(t)=e^{i\lambda_{k}|t|-\epsilon|t|}$ on $\mathbb{R}$ and
$\hat{f}_{k}(\omega)=\frac{2i(\lambda_{k}+i\epsilon)}{(\lambda_{k}+i\epsilon)^{2}-\omega^{2}}$
its Fourier transform.
\end{proof}

\subsection{Integral formula for diagrams}
\begin{rem}
\label{rem:continuous-limit}In the case of an infinite dimension
system, that is if the distribution of eigenvalues converge to a continuous
density $\frac{1}{N}\sum_{k}\delta_{\lambda_{k}}\rightarrow\rho$,
and the matrix $B$ converges to operator with kernel $B(s,t)$ the
above formula for $\tau\rightarrow0^{+}$ can be written as
\[
S_{ij}^{(\ell)}=-i2\pi\delta(\lambda_{i}-\lambda_{j})\lim_{\tau\rightarrow0^{+}}\idotsint B(i,k_{1})B(k_{1},k_{2})\cdots B(k_{m-1},j)\prod_{i=1}^{\ell-1}\frac{\rho(k_{i})dk_{i}}{(\lambda_{k_{i}}-\lambda_{\tau})}
\]
\end{rem}

For a diagram ${\cal T}$ of size $\ell$, and $i\in\{1,\cdots,\ell\}$
denote $k_{i}=|p_{i,1},\cdots,p_{i,n_{i}}\rangle$ and $\lambda_{k_{i}}=\sum_{j_{i}=1}^{n_{i}}\omega_{p_{i,j_{i}}}$.
We define $\Omega^{{\cal T}}(P,\lambda_{\tau})\subset\otimes_{i=1}^{\ell-1}(\mathbb{R}^{d\times n_{i}}\times\mathbb{R}^{n_{i}})$
as follows
\[
\Omega^{{\cal T}}(P,\lambda_{\tau})=\{(\boldsymbol{p}_{i},\boldsymbol{\omega}_{i}):\forall i\in\{1,\cdots,\ell\}:\sum_{j=1}^{n_{i}}p_{i,j}=P,\,\sum_{j=1}^{n_{i}}\omega_{i,j}=\lambda_{\tau}\}
\]

\begin{prop}
With the hypothesis of Remark \ref{rem:continuous-limit}
\begin{align*}
 & \lim_{\tau\rightarrow0^{+}}S_{ij}^{(\ell,{\cal T})}(\tau)=-i2\pi\delta(\lambda_{i}-\lambda_{j})\\
 & \qquad\times\lim_{\tau\rightarrow0^{+}}\left[\int_{\Omega^{{\cal T}}(P,\lambda_{\tau})}B(i,k_{1})\prod_{i=1}^{\ell-1}\frac{B(k_{i},k_{i+1})}{\prod_{j=1}^{n_{i}}(\omega_{j}-\omega_{p_{i,j}})}\left(\prod_{i=1}^{n_{i}}\rho(p_{i})\right)d\boldsymbol{p}^{n_{i}}d\boldsymbol{\omega}^{n_{i}}\right]
\end{align*}
\end{prop}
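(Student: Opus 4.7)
The plan is to start from the discrete formula (\ref{eq:Scattering_Term_Diagram}), pass to the continuous limit as in Remark~\ref{rem:continuous-limit}, and then apply Lemma~\ref{lem:inverse_product_space} to each resolvent factor $1/(\lambda_{k_i}-\lambda_\tau)$ in order to expose the integrals over internal frequencies.

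First I would rewrite $S_{ij}^{(\ell,{\cal T})}(\tau)$ as an integral rather than a sum. The sum over configurations $[k_1,\ldots,k_{\ell-1}] \sim {\cal T}$ becomes an integral over the $p_{i,j}$'s weighted by $\prod \rho(p_{i,j})$, subject to the diagram's combinatorial rule that specifies which momenta are shared between adjacent $k_i$'s (and between $k_1$ and the external $i$, or $k_{\ell-1}$ and $j$). The vertex constraint $\sum_j p_{i,j} = P$ for every $i$ then comes from Proposition~\ref{prop:serie-symmetry}: the $U$-symmetry forces every intermediate state to live in the momentum eigenspace $E_P$ of the external state $i$ (and hence also of $j$, or else the whole amplitude vanishes, matching the $\delta(\lambda_i-\lambda_j)$ that will appear).

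Next I would decompose each factor $1/(\lambda_{k_i}-\lambda_\tau)$ by Lemma~\ref{lem:inverse_product_space}. Since $k_i = |p_{i,1},\ldots,p_{i,n_i}\rangle$ corresponds to an operator of the product form \eqref{eq:Free_operator} with eigenvalues $\sum_j \omega_{p_{i,j}}$, the first identity of the lemma (with $n=n_i$, $\omega = \lambda_\tau$, and the $i\epsilon$ prescription absorbed in $\tau$) gives
\[
\frac{1}{\lambda_{k_i}-\lambda_\tau} \;=\; \frac{1}{(2i\pi)^{n_i-1}} \int_{\omega_{i,1}+\cdots+\omega_{i,n_i}=\lambda_\tau} \prod_{j=1}^{n_i} \frac{1}{\omega_{p_{i,j}}-\omega_{i,j}}\, d\omega_{i,1}\cdots d\omega_{i,n_i}.
\]
Taking the product over $i=1,\ldots,\ell-1$ produces the nested frequency integrals, and the constraints $\sum_j \omega_{i,j} = \lambda_\tau$ combined with the momentum constraints from the previous paragraph assemble exactly the domain $\Omega^{{\cal T}}(P,\lambda_\tau)$.

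Finally, I would handle the overall prefactor using $\frac{\tau}{x^2+\tau^2} \to \pi\delta(x)$ weakly: setting $x = (\lambda_i-\lambda_j)/2$ and tracking the factor $2$ from the change of variables gives $\frac{-i\tau}{\frac{1}{4}(\lambda_i-\lambda_j)^2+\tau^2} \to -2i\pi\,\delta(\lambda_i-\lambda_j)$ as $\tau \to 0^+$. Multiplying by the limit of the integral obtained in the previous step yields the claim. The main obstacle will be the bookkeeping: the $p_{i,j}$ are not all independent because ${\cal T}$ identifies the internal lines across consecutive $k_i$'s, and one must verify that, after identifying shared modes and distributing $\rho$ only over distinct momenta, the combined set of vertex equations is precisely $\Omega^{{\cal T}}(P,\lambda_\tau)$. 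One must also justify commuting the continuous limit with the $\tau \to 0^+$ limit and with the frequency integrals, which as usual in QFT rests on the $i\epsilon$ prescription rather than a fully rigorous dominated-convergence argument.
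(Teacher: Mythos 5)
Your proposal is correct and follows essentially the same route as the paper, whose proof is simply to apply Lemma \ref{lem:inverse_product_space} to the continuous-limit form of (\ref{eq:Scattering_Term_Diagram}) from Remark \ref{rem:continuous-limit}. Your additional bookkeeping (the $U$-symmetry giving the momentum constraints, the $\frac{\tau}{x^{2}+\tau^{2}}\rightarrow\pi\delta(x)$ limit for the prefactor, and the assembly of $\Omega^{{\cal T}}(P,\lambda_{\tau})$) just fills in details the paper leaves implicit.
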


\begin{proof}
We use Lemma \ref{lem:inverse_product_space} for (\ref{eq:Scattering_Term_Diagram}).
\end{proof}
This formula is still slightly different from what one should obtain
using Feynman rules. The reason is that one has more properties using
creation and annihilation operators and Wick theorem than considering
a general $B$ operator. Also, the diagrams defined in Section \ref{subsec:Diagrams}
are more rigid than the Feynman diagram because they are time ordered.
Considering the examples in Section \ref{subsec:Examples-of-diagram},
we expect the values of Feynman diagrams to be equal to a sum of these
diagrams and, because of time symmetry, to be able to group terms
in the computation such that $\left((A-\omega+i\epsilon)^{-1}+(A+\omega+i\epsilon)^{-1}\right)$
appears. Also for operators that are related to the Dirac equation
or the Klein Gordon equation one may be interested in considering
$A$ not diagonal but only block diagonal and use the following.
\begin{itemize}
\item If $A$ is for a one particle that follows the Dirac equation After
the Fourier Transform, the matrix is block diagonal with blocks of
size $4\times4$ that satisfy
\[
\begin{pmatrix}m-z & \sigma_{j}p_{j}\\
\sigma_{j}p_{j} & -m-z
\end{pmatrix}^{-1}=\frac{1}{m^{2}-z^{2}+p^{2}}\begin{pmatrix}m+z & \sigma_{j}p_{j}\\
\sigma_{j}p_{j} & -m+z
\end{pmatrix}
\]
where $\sigma_{i}$ are the so-called Pauli matrices. 
\item The Klein Gordon equation for one particle $\partial_{tt}f=-(-\Delta+m^{2})f=-\widetilde{A}^{2}f$
can be written as
\[
i\partial_{t}\phi=\begin{pmatrix}0 & i\tilde{A}\\
-i\tilde{A} & 0
\end{pmatrix}\phi=:A\phi\quad\text{for\ensuremath{\quad}}\phi=\begin{pmatrix}\widetilde{A}^{1/2}f\\
\widetilde{A}^{-1/2}\partial_{t}f
\end{pmatrix}.
\]
 After the Fourier transform, the matrix $A$ is block diagonal with
blocks of size $2\times2$ that satisfy
\[
\begin{pmatrix}-z & a\\
a & -z
\end{pmatrix}^{-1}=\frac{1}{z^{2}-a^{2}}\begin{pmatrix}-z & -a\\
-a & -z
\end{pmatrix}.
\]
\end{itemize}

\subsection{Feynman parameters}

We give another formula for perturbation series using the so-called
Feynman parameters. Again, it is not exactly the formula that is used
in physics but we present what is more suitable in the formalism of
these notes.
\begin{prop}
\label{lem:(Feynman-parameter)}(Feynman parameters) Let $A$ a diagonal
matrix with eigenvalues $\lambda_{k}$. Assuming the series converges,
we have
\begin{align*}
 & (A+B+i\tau)_{ij}^{-1}\\
 & \quad=\sum_{m=0}^{\infty}(-1)^{m+1}m!\int_{\mathbb{T}^{m}}\sum_{\substack{k_{1},\cdots,k_{m-1}\\
k_{0}=i,,k_{m}=j
}
}\frac{B_{k_{0}k_{1}}\cdots B_{k_{m-1}k_{m}}}{(x_{0}\lambda_{k_{0}}+\cdots+x_{m}\lambda_{k_{m}}+i\tau)^{m+1}}dx_{0}\cdots dx_{m-1}
\end{align*}
where $\mathbb{T}^{m}=\{(x_{0},\cdots,x_{m})\in\mathbb{R}_{+}^{m+1},\,\sum_{i=0}^{m}x_{i}=1\}$
are called the Feynman parameters. 
\end{prop}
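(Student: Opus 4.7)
The plan is a two-step composition: first expand $(A+B+i\tau)^{-1}$ into its resolvent series from Lemma \ref{lem:resolvent-series}, then apply the classical Feynman parameter identity to each summand to collapse the product of linear factors in the denominator into a single $(m{+}1)$-th power, integrated against the standard simplex.

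For the first step, Lemma \ref{lem:resolvent-series} applied with $A$ replaced by $A+i\tau$ gives
\[
(A+B+i\tau)^{-1}=\sum_{m=0}^{\infty}(-1)^{m}(A+i\tau)^{-1}\bigl(B(A+i\tau)^{-1}\bigr)^{m}.
\]
Since $A$ is diagonal with eigenvalues $\lambda_{k}$, reading off the $(i,j)$ entry gives
\[
(A+B+i\tau)_{ij}^{-1}=\sum_{m=0}^{\infty}(-1)^{m}\sum_{\substack{k_{1},\ldots,k_{m-1}\\ k_{0}=i,\,k_{m}=j}}\frac{B_{k_{0}k_{1}}\cdots B_{k_{m-1}k_{m}}}{\prod_{\ell=0}^{m}(\lambda_{k_{\ell}}+i\tau)}.
\]

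For the second step, I would prove and then invoke the Feynman parameter identity
\[
\frac{1}{a_{0}\cdots a_{m}}=m!\int_{\mathbb{T}^{m}}\frac{dx_{0}\cdots dx_{m-1}}{(x_{0}a_{0}+\cdots+x_{m}a_{m})^{m+1}},
\]
valid whenever $\mathrm{Im}\,a_\ell>0$ for every $\ell$. The compact derivation I have in mind writes $\frac{1}{a_\ell}=-i\int_0^\infty e^{ia_\ell t_\ell}\,dt_\ell$, takes the product to get an integral on $\mathbb{R}_+^{m+1}$, and performs the ``simplex'' change of variables $t_\ell=sx_\ell$ with $s=\sum_\ell t_\ell\ge 0$ and $x\in\mathbb{T}^m$ (Jacobian $s^m$); the resulting inner integral $\int_0^\infty s^m e^{is\sum_\ell x_\ell a_\ell}\,ds$ evaluates to $m!/(-i\sum_\ell x_\ell a_\ell)^{m+1}$, cancelling the $(-i)^{m+1}$ prefactor. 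Applied with $a_\ell=\lambda_{k_\ell}+i\tau$, the normalization $\sum_\ell x_\ell=1$ lets me pull the shift outside the bracket: $\sum_\ell x_\ell(\lambda_{k_\ell}+i\tau)=x_0\lambda_{k_0}+\cdots+x_m\lambda_{k_m}+i\tau$. Substituting into each summand of the resolvent series and swapping the finite inner sum with the integral over $\mathbb{T}^m$ then yields the announced formula (up to the overall sign convention of the statement).

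The main obstacle is bookkeeping rather than analysis: the hypothesis that the series converges removes any subtlety about exchanging sums and integrals, and for $\tau>0$ the denominator never vanishes on $\mathbb{T}^m$. The delicate step to double-check is the sign accounting in the Feynman identity, where the $(-i)^{m+1}$ produced by the time integrals must cancel exactly against the $(-i)^{-(m+1)}$ coming out of the final $s$-integral; this is the place where a careless computation is most likely to mis-track a factor, and it is also the reason I would prove the identity from scratch via the time-representation rather than by induction on $m$.
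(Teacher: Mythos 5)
Your proof is correct, but it is organized differently from the paper's. The paper stays at the operator level throughout: it takes the exponential series of Lemma \ref{lem:(Matrix-Exponential-Series)}, rescales the time variables by $t_{i}=x_{i}t$ to make the simplex $\mathbb{T}^{m}$ appear, and then applies the Laplace transform $-i\int_{0}^{\infty}e^{it(A+B)-\tau t}dt=(A+B+i\tau)^{-1}$, so the resolvent expansion and the Feynman parametrization come out of a single computation. You instead decouple the two steps: the operator content is disposed of by Lemma \ref{lem:resolvent-series} applied to $A+i\tau$, and what remains is the purely scalar identity $\frac{1}{a_{0}\cdots a_{m}}=m!\int_{\mathbb{T}^{m}}(x_{0}a_{0}+\cdots+x_{m}a_{m})^{-(m+1)}dx_{0}\cdots dx_{m-1}$ for $\mathrm{Im}\,a_{\ell}>0$, which you prove by the Schwinger representation $\frac{1}{a_{\ell}}=-i\int_{0}^{\infty}e^{ia_{\ell}t_{\ell}}dt_{\ell}$ and the same simplex change of variables $t_{\ell}=sx_{\ell}$. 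The two arguments are therefore built on the same time-integral mechanism, but your modular version is cleaner to justify (all exchanges of limits happen at the level of scalar integrals and a series assumed convergent, rather than operator-valued integrals interchanged with an infinite series), and it makes the statement an immediate corollary of the standard Feynman-parameter trick; what it loses is the paper's nice byproduct, the interpretation of $x_{i}=t_{i}/t$ as the fraction of time spent in the state $k_{i}$. One further point in your favor: your bookkeeping gives the prefactor $(-1)^{m}m!$, and this is indeed the correct sign --- the $m=0$ term must be $+(A+i\tau)^{-1}_{ij}$, and tracking the factors $-i^{m+1}$ and $i^{m+1}$ in the paper's own computation also yields $(-1)^{m}$ --- so the $(-1)^{m+1}$ in the statement (and the stray $i\tau t$ in the last display of the paper's proof) should be read as typos rather than as a discrepancy with your derivation.
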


\begin{proof}
[Proof of Proposition \ref{lem:(Feynman-parameter)}]We use Lemma
\ref{lem:(Matrix-Exponential-Series)} with $t_{i}=x_{i}t$ to obtain
\[
e^{t(A+B)}=\sum_{m=0}^{\infty}t^{m}\int_{\mathbb{T}^{m}}e^{x_{m}tA}B\cdots e^{x_{2}tA}Be^{x_{1}tA}Be^{x_{0}tA}dx_{0}\cdots dx_{m-1}
\]
and then
\begin{align*}
 & \left(-i\int_{0}^{\infty}e^{it(A+B)-\tau t}dt\right)_{ij}\\
 & =-\sum_{m=0}^{\infty}\int_{\mathbb{T}^{m}}i^{m+1}\left(\int_{0}^{\infty}t^{m}e^{-\tau t}e^{ix_{m}tA}B\cdots Be^{ix_{0}tA}dt\right)_{ij}dx_{0}\cdots dx_{m-1}.\\
 & =-\sum_{m=0}^{\infty}i^{m+1}\int_{\mathbb{T}^{m}}\sum_{\substack{k_{1},\cdots,k_{m-1}\\
k_{0}=i,,k_{m}=j
}
}B_{k_{0}k_{1}}\cdots B_{k_{m-1}k_{m}}\\
 & \qquad\qquad\times\left(\int_{0}^{\infty}t^{m}e^{it(x_{0}\lambda_{k_{0}}+\cdots+x_{m}\lambda_{k_{m}})-\tau t}dt\right)dx_{0}\cdots dx_{m-1}\\
 & =\sum_{m=0}^{\infty}(-1)^{m+1}\int_{\mathbb{T}^{m}}m!\sum_{\substack{k_{1},\cdots,k_{m-1}\\
k_{0}=i,,k_{m}=j
}
}\frac{B_{k_{0}k_{1}}B_{k_{1}k_{2}}\cdots B_{k_{m-1}k_{m}}}{(x_{0}\lambda_{k_{0}}+\cdots+x_{m}\lambda_{k_{m}}+i\tau t)^{m+1}}dx_{0}\cdots dx_{m-1}.
\end{align*}
\end{proof}
A nice aspect of this proof is that it gives an interpretation of
a Feynman parameter $x_{i}=\frac{t_{i}}{t}$: It is the fraction of
time the system remains in the state $k_{i}$.

\section{Change of parameterization}

\label{sec:Change-of-parametrisation}Here we would like to discuss
an elementary remark. For a given operator $H$, the choice of $A$
and $B$ such that $H=A+B$ is arbitrary. Indeed, any another decomposition
$H=A'+B'$ is also valid and should give another resolvent series.
We present a few examples. 

\begin{example}
(Perturbed Harmonic Oscillator) On $\mathbb{R}$, we consider the
operator
\[
H=-\Delta+X^{2}+\epsilon X^{4}.
\]
In order to apply Lemma \ref{lem:resolvent-series}, we can choose
our operators as 
\[
A=-\Delta+(1+\eta(\epsilon))X^{2},\quad\text{and}\quad B=\epsilon X^{4}-\eta(\epsilon)X^{2}
\]
where $\eta(\epsilon)\in\mathbb{R}$ is such that the computations
are more convenient\footnote{Typically, we can add the condition that $\langle v,Bv\rangle=0$
for $v$ a given eigenvector of $A$ such that the first order of
perturbation for the eigenvalue is $0$.}.
\end{example}

\begin{example}
In the perturbation series (\ref{eq:eigenvector-series}), see Section
\ref{subsec:Eigenvector} below,
\[
\widehat{v}_{\neq}=\langle v,\widehat{v}\rangle\sum_{\ell=0}^{\infty}(-\epsilon)^{\ell+1}[(A_{\neq}-\widehat{\lambda})^{-1}B_{\neq}]^{\ell}(A_{\neq}-\widehat{\lambda})^{-1}b
\]
we use $\widehat{\lambda}$ instead\footnote{$\widehat{\lambda}=\widehat{\lambda}(\epsilon)$ is as well a perturbation
series in $\epsilon$.} of $\lambda$.
\end{example}

Some time there are very practical physical aspects to take into account. 
\begin{example}
In quantum electrodynamics, a first approach would be to write 
\begin{itemize}
\item $A$ = the free particles (electrons, photons,...).
\item $B$ = the interactions between the free particles. 
\end{itemize}
But because electrons always produce an electromagnetic field it is
impossible to observe an free electron alone (``bare electron'').
Then it would be better if we could consider ``dressed particles''
(for example, an {[}electron+its generated electromagnetic field{]})
since this is what is measured in a laboratory. Therefore we would
like to have a following decomposition instead.
\begin{itemize}
\item $A'$ : non-interacting dressed particles.
\item $B'$ : the interaction between the dressed particles.
\end{itemize}
Unfortunately a precise and formal description of such an $A'$ and
$B'$ is far beyond the understanding of physics by the author. We
would just claim that in some cases, an isolated dressed particle
should correspond to an eigenvector of $A+B$, and it is then a good
motivation for Section \ref{sec:Eigenvector-perturbation}.
\end{example}

\section{Eigenvector perturbation}

\label{sec:Eigenvector-perturbation}A very natural question is to
compute the eigenvector of $A+B$. Again with Lemma \ref{lem:resolvent-series}
we will be able to write down perturbation series and draw some diagrams.
This will be more technical than the previous section because we try
to match some of the formula that appears in \cite{michel2022quantum}
but we should stress that the relations between the resolvent and
the eigenvectors are already interesting in their own. We will also
present another derivation using Lemma \ref{subsec:Matrix-Exponential-Series}
that is closer to what appears in the physics literature.

\subsection{Perturbation series for the eigenvectors}

\label{subsec:Eigenvector} In this section, we will present some
perturbation series, starting with the following Lemma whose proof
is similar to the one of Proposition \ref{prop:eigenvalue_perturbation}.
\begin{prop}
For $\pi(\epsilon)$ the orthonormal projection onto the associated
eigenspace of $A+\epsilon B$ of an isolated eigenvalue, we have
\[
\pi(\epsilon)=\sum_{k=0}^{\infty}\pi^{(k)}\epsilon^{k}\,,\quad\pi^{(k)}=\frac{1}{2i\pi}\oint_{{\cal C}}[(z-A)^{-1}B]^{k}(z-A)^{-1}dz
\]
for $k\geq0$ and ${\cal C}\subset\mathbb{C}$ a small loop surrounding
the eigenvalue.
\end{prop}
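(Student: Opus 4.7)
The plan is to mirror the proof of Proposition \ref{prop:eigenvalue_perturbation} verbatim, replacing the weight $z$ inside the Cauchy integral by $1$ (since we want the spectral projector rather than the weighted spectral sum giving the eigenvalue). The starting point is the Riesz projection formula: for a Jordan contour $\mathcal{C}$ that encircles $\lambda(0)$ and no other point of the spectrum of $A$,
\[
\pi(\epsilon) = \frac{1}{2i\pi}\oint_{\mathcal{C}} (z - A - \epsilon B)^{-1}\, dz.
\]
This identity requires that $\mathcal{C}$ lie in the resolvent set of $A+\epsilon B$ for all small $\epsilon$, which is automatic once $\epsilon$ is small enough, because $\lambda(0)$ is isolated and the spectrum depends continuously on $\epsilon$.

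Next, I would apply Lemma \ref{lem:resolvent-series} pointwise in $z\in\mathcal{C}$ to expand
\[
(z-A-\epsilon B)^{-1} = \sum_{k=0}^{\infty} \epsilon^{k}\, [(z-A)^{-1}B]^{k}(z-A)^{-1}.
\]
Since $\mathcal{C}$ is compact and disjoint from the spectrum of $A$, the quantity $\sup_{z\in\mathcal{C}}\|(z-A)^{-1/2}B(z-A)^{-1/2}\|$ (equivalently, $\sup_{z\in\mathcal{C}}\|(z-A)^{-1}B\|$ in finite dimension) is finite. Hence, for $\epsilon$ small enough, $\|\epsilon(z-A)^{-1/2}B(z-A)^{-1/2}\|<1$ uniformly on $\mathcal{C}$, and the series converges uniformly on $\mathcal{C}$.

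Finally, I would interchange the sum and the contour integral (justified by the uniform convergence on a compact contour), giving
\[
\pi(\epsilon) = \sum_{k=0}^{\infty} \epsilon^{k} \pi^{(k)}, \qquad \pi^{(k)} = \frac{1}{2i\pi}\oint_{\mathcal{C}} [(z-A)^{-1}B]^{k}(z-A)^{-1}\, dz.
\]
The main obstacle, if there is one, is in fact not the computation but the preliminary sanity check that the Riesz projector really is the orthogonal projector onto the perturbed eigenspace. For small $\epsilon$, the perturbed eigenvalues inside $\mathcal{C}$ all come from $\lambda(0)$, and $\pi(\epsilon)$ is the sum of the spectral (Riesz) projectors onto the corresponding generalized eigenspaces. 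In the self-adjoint case, this coincides with the orthogonal projector onto the perturbed eigenspace; in the general case one should either assume $A$ and $B$ self-adjoint or interpret $\pi(\epsilon)$ as the Riesz projector. Everything else is a direct transcription of the argument already used for the eigenvalue.
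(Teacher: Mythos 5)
Your proof is correct and is essentially the paper's intended argument: the paper merely says the proof is similar to that of Proposition \ref{prop:eigenvalue_perturbation}, i.e.\ write the Riesz projector as the contour integral $\frac{1}{2i\pi}\oint_{\cal C}(z-A-\epsilon B)^{-1}dz$ and expand with Lemma \ref{lem:resolvent-series}, uniformly on the compact contour, exactly as you do. Your closing remark that ``orthonormal projection'' tacitly requires self-adjointness (otherwise one should read $\pi(\epsilon)$ as the Riesz projector) is a fair refinement rather than a deviation.
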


In the rest of this section, we derive other formulas that do not
relie on integrating on a loop ${\cal C}\subset\mathbb{C}$. Let $\lambda,v$
an eigenvalue and eigenvector of $A$ and
\[
A=\lambda vv^{*}+A_{\neq},\quad B=\langle v,Bv\rangle vv^{*}+bv^{*}+vb^{*}+B_{\neq},
\]
with $A_{\neq}$, $B_{\neq}$ the matrix restricted to the orthocomplement\footnote{We will also denote $u_{\neq}$ the restriction of a vector $u$ to
the orthocomplement of $v$.} of $v$ such that we can write the following block matrix
\[
A+B=\begin{pmatrix}\lambda+\langle v,Bv\rangle & b^{*}\\
b & A_{\neq}+B_{\neq}
\end{pmatrix}.
\]
The next proposition will give the eigenvector as a resolvent so that
again we can use Lemma \ref{lem:resolvent-series}.
\begin{prop}
\label{lem:eigenvector-formula}Let $\widehat{v}=\langle v,\widehat{v}\rangle v+\widehat{v}_{\neq}$
eigenvector of $A+B$ with $\langle v,\widehat{v}\rangle\neq0$ and
$\widehat{\lambda}$ its associated eigenvalue. Then 
\[
\widehat{\lambda}-\lambda=\langle v,Bv\rangle-\langle b,(A_{\neq}+B_{\neq}-\widehat{\lambda})^{-1}b\rangle
\]
and 
\[
\widehat{v}_{\neq}=-\langle v,\widehat{v}\rangle(A_{\neq}+B_{\neq}-\widehat{\lambda})^{-1}b.
\]
\end{prop}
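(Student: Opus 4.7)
The plan is to exploit the block decomposition of $A+B$ written in the statement. Using the orthogonal splitting of the ambient space into $\mathrm{span}(v)$ and its orthocomplement, the eigenvalue equation $(A+B)\widehat{v} = \widehat{\lambda}\widehat{v}$ with $\widehat{v} = \langle v,\widehat{v}\rangle v + \widehat{v}_{\neq}$ becomes the coupled system
\begin{align*}
\langle v,\widehat{v}\rangle(\lambda + \langle v,Bv\rangle) + \langle b,\widehat{v}_{\neq}\rangle &= \widehat{\lambda}\,\langle v,\widehat{v}\rangle,\\
\langle v,\widehat{v}\rangle\, b + (A_{\neq}+B_{\neq})\widehat{v}_{\neq} &= \widehat{\lambda}\,\widehat{v}_{\neq}.
\end{align*}
The strategy is then simply to solve the second equation for $\widehat{v}_{\neq}$ and substitute into the first.

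First I would rearrange the second line as $(A_{\neq}+B_{\neq}-\widehat{\lambda})\widehat{v}_{\neq} = -\langle v,\widehat{v}\rangle\, b$. Provided $\widehat{\lambda}$ is not in the spectrum of $A_{\neq}+B_{\neq}$, which is the natural nondegeneracy hypothesis implicit in the statement (and which will hold for $B$ small enough when the unperturbed eigenvalue $\lambda$ is isolated, by continuity of the spectrum and the fact that $A_{\neq}$ does not see $\lambda$), the operator on the left is invertible and one obtains immediately the second formula
\[
\widehat{v}_{\neq} = -\langle v,\widehat{v}\rangle\,(A_{\neq}+B_{\neq}-\widehat{\lambda})^{-1}b.
\]

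Next I would plug this expression into the first equation. Using $\langle b,\widehat{v}_{\neq}\rangle = -\langle v,\widehat{v}\rangle\,\langle b,(A_{\neq}+B_{\neq}-\widehat{\lambda})^{-1}b\rangle$ and dividing through by the nonzero scalar $\langle v,\widehat{v}\rangle$, one obtains
\[
\lambda + \langle v,Bv\rangle - \langle b,(A_{\neq}+B_{\neq}-\widehat{\lambda})^{-1}b\rangle = \widehat{\lambda},
\]
which rearranges into the first formula of the proposition.

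The only subtle point is the invertibility of $A_{\neq}+B_{\neq}-\widehat{\lambda}$. This is not a routine calculation but a structural assumption: it says that $\widehat{\lambda}$ remains isolated from the other eigenvalues, so that the restricted resolvent exists. I would briefly note that once this is granted the proof is purely algebraic, and that the formula is then perfectly set up to feed $(A_{\neq}+B_{\neq}-\widehat{\lambda})^{-1}$ into Lemma \ref{lem:resolvent-series} to produce the resolvent series for $\widehat{v}_{\neq}$ and for $\widehat{\lambda}-\lambda$ announced in the rest of Section \ref{subsec:Eigenvector}.
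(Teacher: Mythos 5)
Your proof is correct and follows essentially the same route as the paper: project the eigenvalue equation $(A+B-\widehat{\lambda})\widehat{v}=0$ onto $v$ and its orthocomplement, solve the orthocomplement equation for $\widehat{v}_{\neq}$, and substitute into the other to get the eigenvalue relation. Your extra remark on the invertibility of $A_{\neq}+B_{\neq}-\widehat{\lambda}$ is a reasonable observation that the paper leaves implicit in the statement of the formula.
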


Remark that the first equality gives the first two terms of the perturbation
series of the eigenvalue.

\begin{proof}
[Proof of Proposition \ref{lem:eigenvector-formula}] By definition,
we have $(A+B-\widehat{\lambda})\widehat{v}=0$ and this equation
projected on $v$ and on its orthocomplement gives 
\[
\begin{cases}
\left(\lambda+\langle v,Bv\rangle-\widehat{\lambda}\right)\langle v,\widehat{v}\rangle+\langle b,\widehat{v}_{\neq}\rangle=0\\
(A_{\neq}+B_{\neq}-\widehat{\lambda})\widehat{v}_{\neq}+\langle v,\widehat{v}\rangle b=0.
\end{cases}
\]
The second line implies the second equality of Proposition \ref{lem:eigenvector-formula},
and used with the first line, one obtains the first equality of Proposition
\ref{lem:eigenvector-formula}.
\end{proof}
One can then write the perturbation series for $\widehat{v}_{\neq}$
\begin{equation}
\widehat{v}_{\neq}=\langle v,\widehat{v}\rangle\sum_{\ell=1}^{\infty}(-1)^{\ell-1}[(A_{\neq}-\widehat{\lambda})^{-1}B_{\neq}]^{\ell-1}(A_{\neq}-\widehat{\lambda})^{-1}b.\label{eq:eigenvector-series}
\end{equation}
However, to use (\ref{eq:eigenvector-series}) in practice, we miss
the important quantities $\langle v,\widehat{v}\rangle$ and $\widehat{\lambda}$.
These two are indirectly encoded in the function $\langle v,(A+B-z)^{-1}v\rangle$,
since $\widehat{\lambda}$ is a pole and $|\langle v,\widehat{v}\rangle|^{2}$
its multiplicative factor 
\[
\langle v,(A+B-z)^{-1}v\rangle\sim_{z\rightarrow\widehat{\lambda}}\frac{|\langle v,\widehat{v}\rangle|^{2}}{\widehat{\lambda}-z}.
\]
More generally, we can write the spectrum measure $\widehat{\mu}_{v}=\sum_{i}|\langle\widehat{v}_{i},v\rangle|^{2}\delta_{\widehat{\lambda}_{i}}$
associated to $v$ with $(\widehat{\lambda}_{i},\widehat{v}_{i}\rangle$
the eigenvalue and eigenvector of $A+B$ so that we have
\begin{equation}
\langle u,(A+B-z)^{-1}v\rangle=\sum_{i}\frac{|\langle\widehat{v}_{i},v\rangle|^{2}}{\widehat{\lambda_{i}}-z}=\int_{\mathbb{R}}\frac{d\widehat{\mu}_{v}(x)}{x-z}.\label{eq:Spectrum}
\end{equation}
Therefore, we are left to compute $\langle v,(A+B-z)^{-1}v\rangle$.
Adapting the proof\footnote{By definition $(A+B-z)(A+B-z)^{-1}v=v$...}
of Proposition \ref{lem:eigenvector-formula}, we obtain (the Schur
complement formula) that for $z\in\mathbb{C}$ 
\begin{equation}
\langle v,(A+B-z)^{-1}v\rangle=\frac{1}{\lambda+\langle v,Bv\rangle-z-\langle b,(A_{\neq}+B_{\neq}-z)^{-1}b\rangle}.\label{eq:Schur}
\end{equation}
It is then also fine to compute $\langle b,(A_{\neq}+B_{\neq}-z)^{-1}b\rangle$
instead, and for $A$ a diagonal matrix, if the series converges,
we directly have
\begin{align*}
\langle b,(A_{\neq}+B_{\neq}-z)^{-1}b\rangle & =\sum_{\ell}(-1)^{\ell-1}\sum_{k_{1},\cdots k_{\ell-1}\neq i}\frac{B_{ik_{1}}\left(\prod_{j=1}^{n-1}B_{k_{j}k_{j+1}}\right)B_{k_{\ell-1}i}}{\prod(\lambda_{k_{j}}-z)}.
\end{align*}
Finally, we mention the following formula
\begin{equation}
|\langle v,\widehat{v}\rangle|^{2}=\frac{1}{1+\|(A_{\neq}+B_{\neq}-\widehat{\lambda})^{-1}b\|^{2}}=\frac{1}{1+\frac{d}{d\widehat{\lambda}}\langle b,(A_{\neq}+B_{\neq}-\widehat{\lambda})^{-1}b\rangle}\label{eq:normalization}
\end{equation}
whose proof directly follows from the normalization $\|\widehat{v}\|^{2}=1$.

\subsection{Diagrams for the eigenvectors}

Similarly as in Section \ref{sec:Diagrams}, we can draw diagrams
for the perturbation series that appears in the previous section.
We start with the diagrams for the resolvent and then assemble different
diagrams to obtain other quantities.
\begin{itemize}
\item We denote $\widetilde{v}(z)=v+(z-A_{\neq}-B_{\neq})^{-1}b$ such that
with Lemma \ref{lem:eigenvector-formula} $\widehat{v}=\langle v,\widehat{v}\rangle\widetilde{v}(\widehat{\lambda})$.
If $A$ is diagonal, with Lemma \ref{lem:resolvent-series}, we have
\begin{equation}
\widetilde{v}_{k}(z)=\sum_{\ell=0}^{\infty}\widetilde{v}_{k}^{(\ell)}(z)=\sum_{\ell=0}^{\infty}\sum_{{\cal T}}\sum_{[k_{1},\cdots,k_{\ell}]\sim{\cal T}}\frac{\prod_{j}^{\ell}B_{k_{j}k_{j-1}}}{\prod_{j=1}^{\ell}(z-\lambda_{k_{j}})}=:\sum_{\ell=0}^{\infty}\sum_{{\cal T}}\widetilde{v}_{k}^{(\ell,{\cal T})}(z)\label{eq:eigenvector-diagram}
\end{equation}
for $k_{0}=v$ and $k_{\ell}=k\neq v$. Because of the condition $A_{\neq},B_{\neq}$
the index $k_{0}=0$ only appears in the very beginning of the diagram\footnote{This condition should correspond to what are called ``1-particle
irreducible diagrams'', Indeed, because of the symmetry $k_{i}\in E_{p}$
, all the in-between indices has 2 particles or more.}. The final index $k_{\ell}=k$ gives the outgoing edges. 
\item We can also compute $\langle b,\widetilde{v}(z)\rangle$ with the
same formula as (\ref{eq:eigenvector-diagram}), but with $k_{0}=k_{\ell}=v$.
As before, this index appears only in the beginning and at the end
of the diagram. 
\item For the product
\[
\langle b,\widetilde{v}(z)\rangle^{m}=\sum_{\ell_{1},\cdots,\ell_{m}}\sum_{{\cal T}_{1},\cdots,T_{m}}\langle b,\widetilde{v}^{(\ell_{1},{\cal T}_{1})}(z)\rangle\cdots\langle b,\widetilde{v}^{(\ell_{m},{\cal T}_{m})}(z)\rangle
\]
we draw the $m$ diagrams ${\cal T}_{1},\cdots,{\cal T}_{m}$ together
in a line and call it a ``sausage diagram''. Such terms appear for
example in (\ref{eq:Schur}), written as
\[
\langle v,(A+B-z)^{-1}v\rangle=\frac{1}{\lambda+\langle v,Bv\rangle-z}\sum_{k}\left(\frac{\langle b,\widetilde{v}_{k}(z)\rangle}{\lambda+\langle v,Bv\rangle-z}\right)^{k}
\]
In that case, we add a factor $\frac{1}{\lambda+\langle v,Bv\rangle-z}$
for the connecting edges between the different diagrams ${\cal T}$
and the incoming and outgoing edges.
\item Given a matrix $C$ we can write
\[
\langle\widetilde{v}(z),C\widetilde{w}(z)\rangle=\sum_{\substack{\ell_{1},\ell_{2}=0\\
{\cal T}_{1},{\cal T}_{2}
}
}^{\infty}\sum_{\substack{[k_{1},\cdots,k_{\ell}]\sim{\cal T}_{1}\\{}
[k_{\ell_{1}+1},\cdots,k_{\ell_{1}+\ell_{2}}]\sim{\cal T}_{2}
}
}\frac{\prod_{j}^{\ell_{1}}B_{k_{j}k_{j-1}}C_{\ell_{1},\ell_{1}+1}\prod_{\ell_{1}+1}^{\ell_{1}+\ell_{2}}B_{k_{j}k_{j-1}}}{\prod_{j=1}^{\ell_{1}+\ell_{2}}(z-\lambda_{k_{j}})}
\]
with $k_{0}=v$ and $k_{\ell_{1}+\ell_{2}}=w$. These terms correspond
to $\langle\widetilde{v}_{i}^{(\ell_{1},{\cal T}_{1})},C\widetilde{v}_{j}^{(\ell_{2},{\cal T}_{2})}\rangle$
and can be represented by drawing ${\cal T}_{1}$ on the left, ${\cal T}_{2}$
on the right, and adding $C$ in the middle. 
\item The norm $\|\widetilde{v}(z)\|^{2}$ is a particular case of the previous
one with $v=w$ and $C=I$.
\end{itemize}
\begin{figure}
\begin{centering}
\includegraphics[height=3cm]{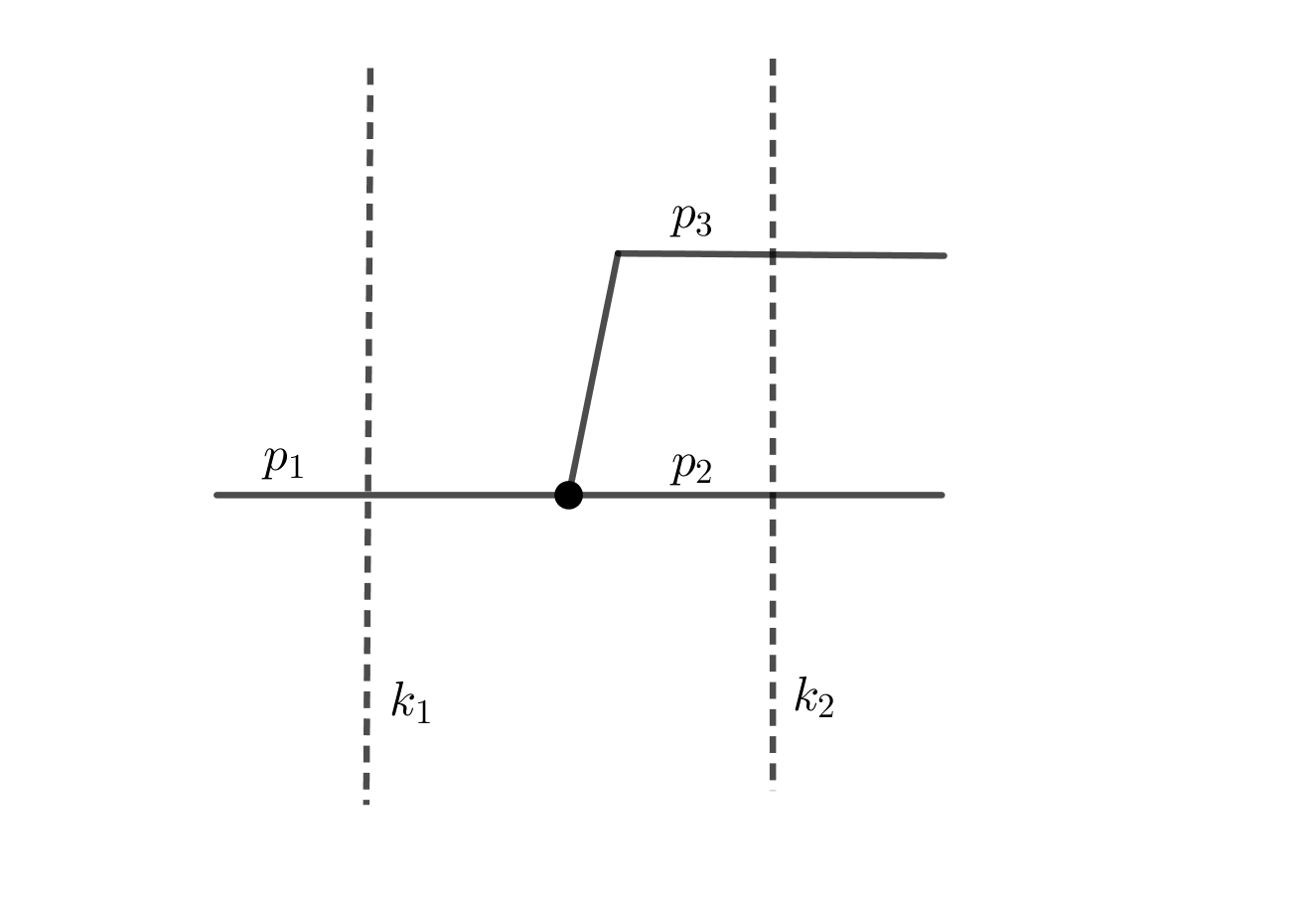}\includegraphics[height=3cm]{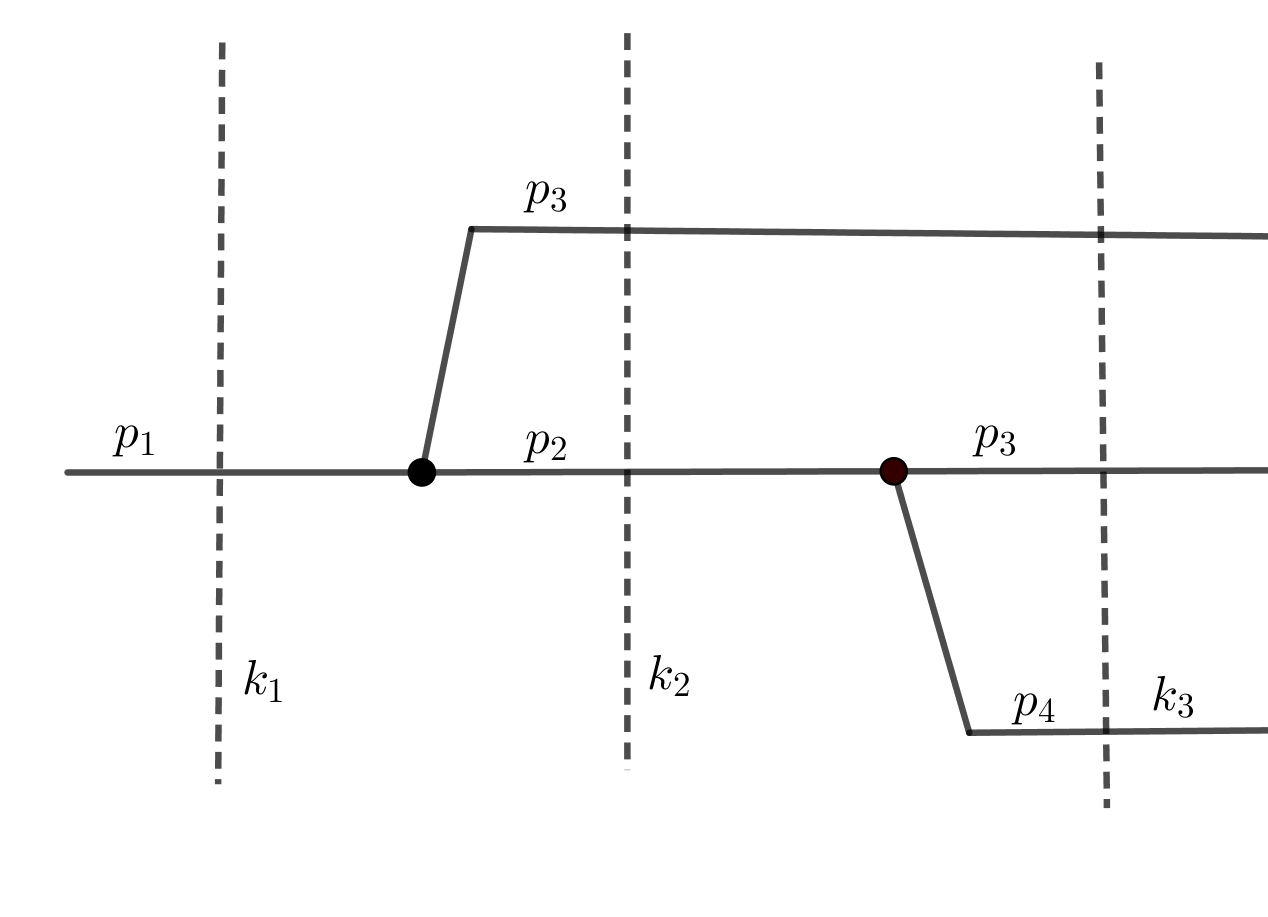}$\,$\includegraphics[height=3cm]{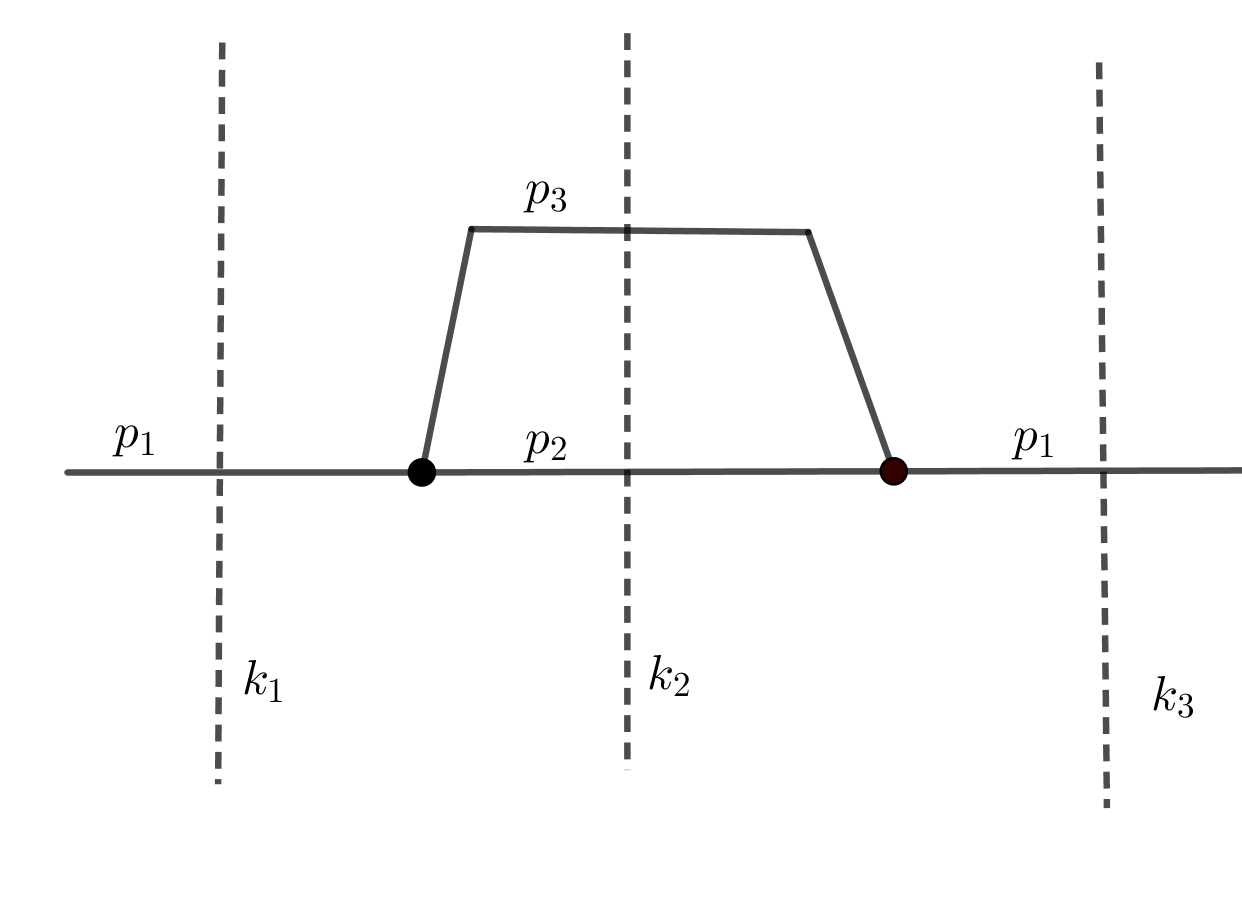}
\par\end{centering}
\centering{}\caption{Example of diagrams ${\cal T}_{1}$,${\cal T}_{2},$${\cal T}_{3}$
with $v=k_{1}=|p_{1}\rangle$, $k_{2}=|p_{2},p_{3}\rangle$, $k_{3}=|p_{3},p_{4},p_{5}\rangle$
corresponding to the terms $\widetilde{v}_{k_{2}}^{(1,{\cal T}_{1})}(z)=\frac{B_{k_{1}k_{2}}}{\lambda_{k_{2}}-z}$,
$\widetilde{v}_{k_{3}}^{(2,{\cal T}_{2})}(z)=\sum_{k_{2}}\frac{B_{k_{3}k_{2}}B_{k_{2}k_{1}}}{(\lambda_{k_{3}}-z)(\lambda_{k_{2}}-z)}$
and $\langle b,v^{(1,{\cal T}_{1}}(z)\rangle=\sum_{k_{2}}\frac{B_{k_{1}k_{2}}B_{k_{2}k_{1}}}{(\lambda_{k_{2}}-z)}$.\label{Fig:eig} }
\end{figure}

\begin{figure}
\begin{centering}
\includegraphics[height=4cm]{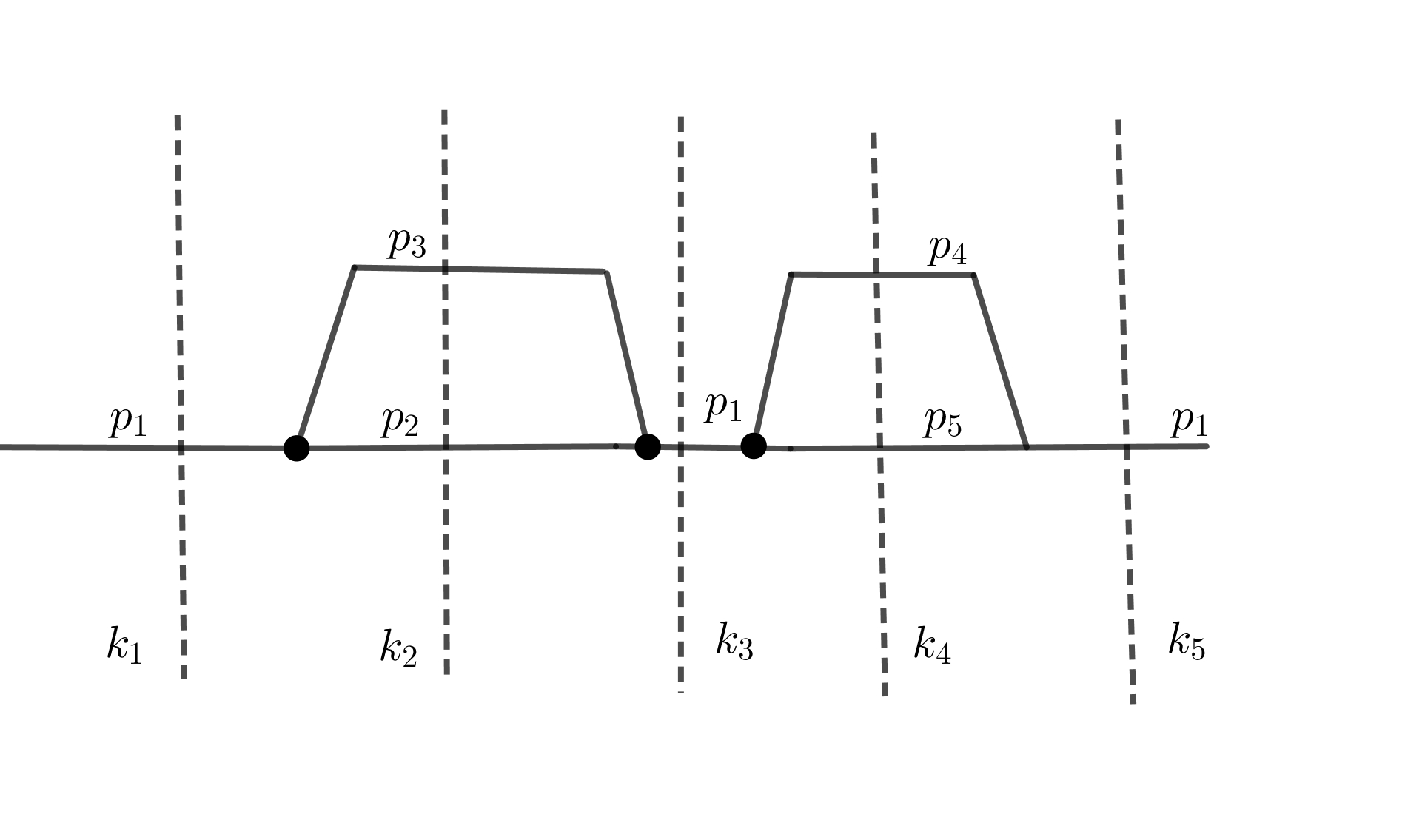}
\par\end{centering}
\begin{centering}
\includegraphics[height=4cm]{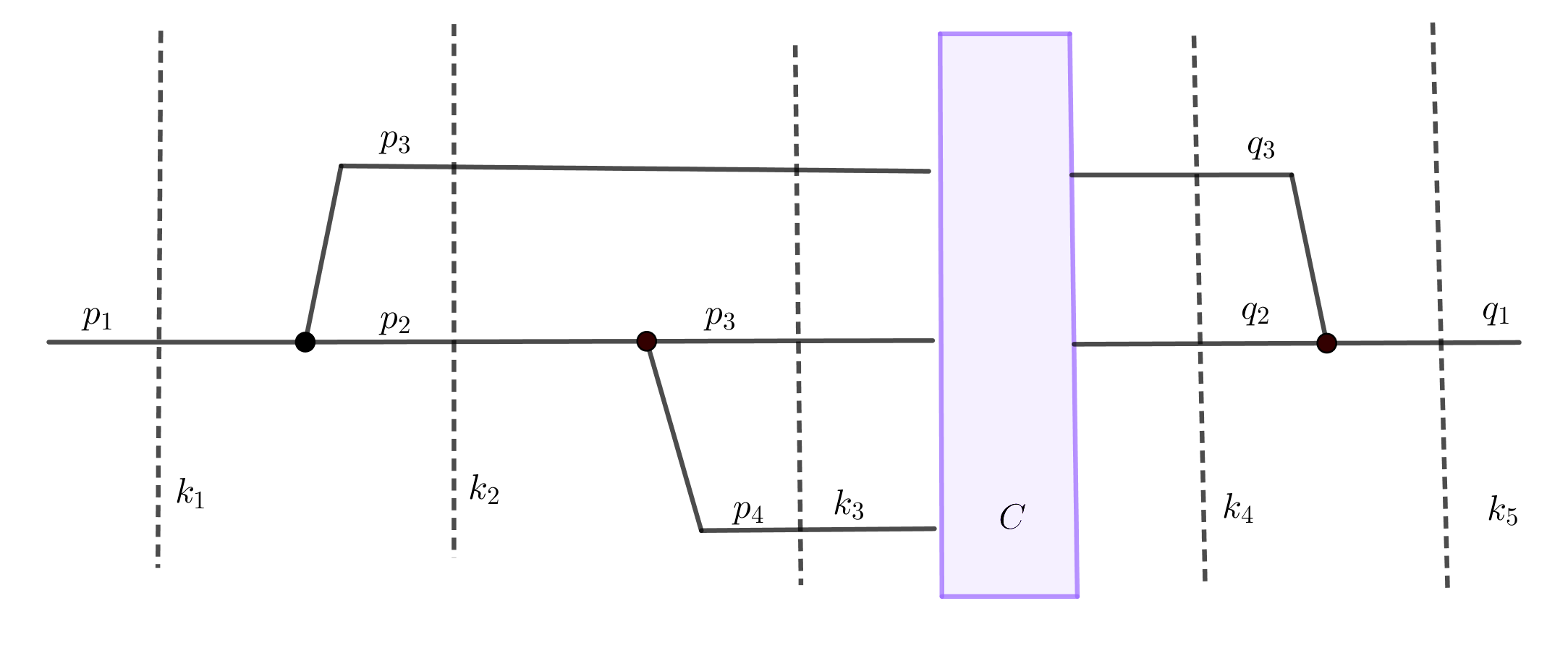}
\par\end{centering}
\caption{Example of diagrams corresponding $\langle b,v^{(1,{\cal T}_{1}}(z)\rangle^{2}$
and  $\langle\widetilde{v}_{i}^{(2,{\cal T}_{2})},C\widetilde{v}_{j}^{(1,{\cal T}_{1})}\rangle=\sum\frac{B_{k_{1}k_{2}}B_{k_{2}k_{3}}C_{k_{3}k_{4}}B_{k_{4}k_{5}}}{(\lambda_{k_{2}}-\widehat{\lambda}_{i})(\lambda_{k_{3}}-\widehat{\lambda}_{j})(\lambda_{k_{4}}-\widehat{\lambda}_{j})}$
with ${\cal T}_{1},{\cal T}_{2}$ the same as in Figure \ref{Fig:eig}}
\end{figure}

What we are really interested in is to compute

\begin{equation}
\langle\widehat{v},C\widehat{w}\rangle=\frac{\langle\widetilde{v}(\widehat{\lambda}),C\widetilde{w}(\widehat{\mu})\rangle}{\|\widetilde{v}(\widehat{\lambda})\|\|\widetilde{w}(\widehat{\mu})\|}\label{eq:C-eigenvector}
\end{equation}
with $\widehat{v},\widehat{w}$ eigenvectors of $A+B$ and $\widehat{\lambda},\widehat{\mu}$
the associated eigenvalue. This describe how the system, under a perturbation
$C$, evolves from a given eigenvector to another one. It is possible
to write down and combine the previous perturbation series but because
it looks a bit complicated, we will just finish by a small remark. 
\begin{rem}
\label{rem:Naive-Cancelation}For $C=I$ all the terms in the perturbation
series cancel. 
\end{rem}

\begin{proof}
[Proof of Remark \ref{rem:Naive-Cancelation}]For $\widehat{v}=\sum_{\ell=0}^{\infty}\widehat{v}^{(\ell)}$
, because $\|\widehat{v}\|^{2}=1$, for all $k\geq1$ we have $\sum_{i=0}^{k}\langle\widehat{v}^{(\ell)},\widehat{v}^{(k-\ell)}\rangle=0$. 
\end{proof}
For example, if $\langle\widetilde{v}_{\neq},Cv\rangle=0$ we can
write
\[
\langle\widehat{v},C\widehat{v}\rangle=\frac{\langle v,Cv\rangle+\langle\widetilde{v}_{\neq},C\widetilde{v}_{\neq}\rangle}{1+\|\widetilde{v}_{\neq}\|^{2}}=\langle v,Cv\rangle+\frac{\langle\widetilde{v}_{\neq},(C-\langle v,Cv\rangle I)\widetilde{v}_{\neq}\rangle}{1+\|\widetilde{v}_{i,\neq}\|^{2}}.
\]
Replacing $C$ by $(C-\langle v_{i},Cv_{i}\rangle I)$ may seem a
small progress, but it is not impossible that some of the terms diverge
evaluated on $C$ while stay bounded with $(C-\langle v_{i},Cv_{i}\rangle I)$.
Also, such a trick could be repeated if other cancellations occur.

\subsection{Adiabatic evolution}

\label{subsec:Adiabatic-evolution}Here we give another approach to
obtain a perturbation serie for the eigenvector using Dyson equation.
This is what is done in \cite{michel2022quantum} and relied on the
so called ``adiabatic evolution''. We present it for completeness.
We solve the Schrodinger evolution
\begin{equation}
i\partial_{t}u(t)=\eta H(t)u(t)\label{eq:Addiab}
\end{equation}
for $t\in[0,1]$ and the regime $\eta\gg1$ is called the ``adiabatic
evolution''. We denote $e_{i}(t),\lambda_{i}(t)$ the eigenvectors
eigenvalues of $H(t)$ and $\phi_{i}(t)=\int_{0}^{t}\lambda_{i}(s)ds$
the accumulated phase. 

\begin{prop}
\label{prop:Addiab} We assume that $|\lambda_{j}(s)-\lambda_{i}(s)|\geq\Delta$
(spectral gap) for all $j\neq i$, $s\in[0,1]$ and that $\|\partial_{t}e_{j}(s)\|,\,\|\partial_{tt}e_{j}(s)\|$
and $|\lambda_{j}'(s)|$ are uniformly bounded for $s\in[0,1]$ (${\cal C}^{1}$
evolution). If $u(0)=e_{i}(0)$ then in the adiabatic limit we have
\[
\|u(t)-e_{i}(t)e^{-\eta\phi_{i}(t)}\|=O_{\eta\rightarrow\infty}(\eta^{-1}).
\]
\end{prop}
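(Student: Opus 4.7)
The strategy is the standard adiabatic expansion in the instantaneous eigenbasis. Write
\[
u(t) = \sum_j c_j(t)\, e_j(t)\, e^{-i\eta \phi_j(t)}, \qquad c_j(0)=\delta_{ij}.
\]
Substituting into (\ref{eq:Addiab}), the $\eta$-terms cancel because $H(t)e_j(t) = \lambda_j(t) e_j(t)$ and $\dot\phi_j = \lambda_j$, and taking the inner product with $e_k(t)e^{-i\eta\phi_k(t)}$ yields
\[
\dot c_k(t) = -\sum_j \langle e_k(t),\dot e_j(t)\rangle\, c_j(t)\, e^{i\eta(\phi_k-\phi_j)(t)}.
\]
The goal becomes to show $|c_k(t)| = O(\eta^{-1})$ for each $k\neq i$ and $|c_i(t)|=1+O(\eta^{-2})$, from which the proposition follows modulo the Berry phase discussed below.

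For $k\neq i$, I integrate from $0$ and split the sum over $j$ into the non-oscillating term $j=k$ (a bounded linear term in $c_k$, suitable for a Gronwall argument) and the oscillating terms $j\neq k$. For the latter, the spectral gap $|\lambda_k-\lambda_j|\geq\Delta$ makes the identity
\[
e^{i\eta(\phi_k-\phi_j)(s)} = \frac{1}{i\eta\,(\lambda_k-\lambda_j)(s)}\,\partial_s e^{i\eta(\phi_k-\phi_j)(s)}
\]
meaningful, and a single integration by parts on $\int_0^t c_j \langle e_k,\dot e_j\rangle e^{i\eta(\phi_k-\phi_j)}\,ds$ produces a boundary term of size $\eta^{-1}$ together with a bulk remainder also carrying an $\eta^{-1}$ prefactor. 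The derivative generated by the IBP falls either on $\dot e_j$ (controlled by the assumed bound on $\partial_{tt}e_j$), on $(\lambda_k-\lambda_j)^{-1}$ (controlled by the gap and by the bound on $\lambda_j'$), or on $c_j$ (which is bounded by unitarity, with $\dot c_j=O(1)$ from the ODE itself). A Gronwall inequality on $\max_{k\neq i}|c_k(t)|$ then gives $|c_k(t)|\lesssim\eta^{-1}$ uniformly on $[0,1]$.

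For the diagonal amplitude, unitarity $\|u(t)\|=1$ forces $|c_i(t)|^2 = 1 - \sum_{k\neq i}|c_k(t)|^2 = 1 + O(\eta^{-2})$. The phase of $c_i$ is driven by the (pure imaginary, since $\|e_i\|\equiv 1$) coefficient $\langle e_i,\dot e_i\rangle$ and produces the usual $\eta$-independent Berry phase; one either chooses a smooth gauge for $e_i(t)$ in which $\langle e_i(t),\dot e_i(t)\rangle\equiv 0$, or absorbs this phase into the definition of $\phi_i$. In either case, combining the two estimates yields $\|u(t)-e_i(t)e^{-i\eta\phi_i(t)}\|=O(\eta^{-1})$.

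The main technical obstacle is the integration-by-parts step: after one IBP the integrand still contains $\dot c_j$, which is only $O(1)$ a priori. The resolution is that the gap denominator supplies a factor $\eta^{-1}$ in front of that $\dot c_j$, so the Gronwall bootstrap closes self-consistently; the hypotheses on $\partial_{tt}e_j$ and $\lambda_j'$ are precisely what makes the post-IBP integrand integrable with a uniform $\eta^{-1}$ prefactor. An alternative closer to the spirit of these notes is to write the propagator for the rotated system (interaction picture) as the time-ordered Dyson series of Remark \ref{rem:time-dependent} and observe directly that each off-diagonal entry carries an oscillating factor $e^{i\eta\int(\lambda_k-\lambda_j)}$ from which the same IBP argument extracts the $\eta^{-1}$ gain.
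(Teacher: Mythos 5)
Your proof is correct and follows essentially the same route as the paper: expand $u(t)$ in the instantaneous eigenbasis with the dynamical phases removed (your $c_j$ are the paper's $\alpha_j$), integrate the resulting coupled ODE, and gain the factor $\eta^{-1}$ by integrating by parts the oscillatory factor $e^{i\eta(\phi_k-\phi_j)}$ using the spectral gap, with the hypotheses on $\partial_{tt}e_j$ and $\lambda_j'$ controlling the post-IBP integrand. Your explicit handling of the Gronwall bootstrap for the $\dot c_j$ term and of the Berry-phase gauge choice for $e_i(t)$ makes precise two points the paper's proof leaves implicit (it silently drops the $j=i$ term and closes with ``Cauchy--Schwarz and the hypotheses''), but the argument is the same.
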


\begin{proof}
With $\alpha_{i}(t)=\langle e_{i}(t),u(t)\rangle e^{-i\eta\phi_{i}(t)}$
we have 
\begin{align*}
\partial_{t}\alpha_{i}(t) & =\langle\partial_{t}e_{i}(t),u(t)\rangle e^{-i\eta\phi_{i}(t)}=\sum_{j}\alpha_{j}(t)\langle\partial_{t}e_{i}(t),e_{j}(t)\rangle e^{-i\eta(\phi_{i}(t)-\phi_{j}(t))}
\end{align*}
If we denote $h_{ij}(t)=\alpha_{j}(t)\langle\partial_{t}e_{i}(t),e_{j}(t)\rangle$
we have 
\begin{align*}
 & \alpha_{i}(t)-\alpha_{i}(0)\\
 & =\sum_{j\neq i}\int_{0}^{t}h_{ij}(s)e^{-i\eta(\phi_{i}(s)-\phi_{j}(s))}ds\\
 & =\frac{1}{i\eta}\sum_{j\neq i}\left[\frac{h_{ij}(s)e^{-i\eta(\phi_{i}(s)-\phi_{j}(s))}}{\lambda_{j}(s)-\lambda_{i}(s)}\right]_{0}^{t}\\
 & \quad-\int_{0}^{t}\left(\frac{h_{ij}'(s)}{\lambda_{j}(s)-\lambda_{i}(s)}-\frac{h_{ij}(s)(\lambda_{j}'(s)-\lambda_{i}'(s))}{(\lambda_{j}(s)-\lambda_{i}(s))^{2}}\right)e^{-i\eta(\phi_{i}(s)-\phi_{j}(s))}ds
\end{align*}
where we use an integration by part. We then use Cauchy-Schwartz and
the hypothesis of the proposition to finish the proof.
\end{proof}
The proof is essentially the same as showing the decay of the Fourier
transform of a smooth function and with stronger regularity assumptions
: ${\cal C}^{K}$ instead of ${\cal C}^{1}$, one could improve $O_{\eta\rightarrow\infty}(\eta^{-1})$
to $O_{\eta\rightarrow\infty}(\eta^{-K})$.

It is possible to use the adiabatic evolution to propose another perturbation
formula for the eigenvector. 
\begin{prop}
\label{prop:eig-adiab}Let $f\in{\cal C}^{1}([0,1])$ with $f(0)=0$
and $f(1)=1$ and $H(t)=A+f(t)B$. Let $u_{i}$ eigenvector of $A$
with eigenvalue $\lambda_{i}$, $\widehat{u}_{i}$ eigenvector of
$A+B$ and $\widetilde{B}(t)=e^{itA}Be^{-itA}$. If $\|B\|\leq\min_{j\neq i}|\lambda_{i}-\lambda_{j}|$
then 
\begin{align*}
\widehat{u}_{i} & =e^{\eta(\phi_{i}(1)-\lambda_{i})}\sum_{m=0}^{M}(-i\eta)^{m}\int_{0\leq t_{0}\leq\cdots\leq t_{m-1}\leq1}\left[\prod_{j=0}^{m-1}f(t_{j})\widetilde{B}(t_{j})\right]u_{i}\,dt_{0}\cdots dt_{m-1}\\
 & \quad+O\left(\eta^{-1}+\frac{\left(\|B\|\eta\right)^{M}}{M!}\right)
\end{align*}
\end{prop}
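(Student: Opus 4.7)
The plan is to combine the two main tools already developed in the paper: Proposition~\ref{prop:Addiab} (adiabatic theorem) is used to replace the target eigenvector $\widehat{u}_i$ by the endpoint $u(1)$ of the Schr\"odinger evolution (\ref{eq:Addiab}) up to a phase and an $O(\eta^{-1})$ error; then Lemma~\ref{lem:(Dyson-Series)} (extended to time-dependent perturbations as in Remark~\ref{rem:time-dependent}) is used to expand $u(1)$ as an explicit finite series in $B$ with a controlled remainder. Matching the prefactors from these two steps produces the stated formula.

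For the first step, I would verify that $H(t)=A+f(t)B$ satisfies the hypotheses of Proposition~\ref{prop:Addiab} on $[0,1]$. The bound $\|B\|\le \min_{j\ne i}|\lambda_i-\lambda_j|$ combined with Weyl's perturbation inequality gives a uniform spectral gap $|\lambda_j(t)-\lambda_i(t)|\ge \Delta>0$, so the eigenprojection onto $\lambda_i(t)$ can be written as the Cauchy integral $\pi_i(t)=\frac{1}{2i\pi}\oint_{\mathcal{C}}(z-H(t))^{-1}dz$ around a contour enclosing only $\lambda_i(t)$. Differentiating under the integral shows that $\pi_i,\,e_i,\,\lambda_i$ inherit the regularity of $f$, yielding the uniform bounds on $\|\partial_t e_i\|$, $\|\partial_{tt} e_i\|$ and $|\lambda_i'|$ needed for Proposition~\ref{prop:Addiab}. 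That proposition then gives $\|u(1)-\widehat{u}_i\,e^{-i\eta\phi_i(1)}\|=O(\eta^{-1})$, hence $\widehat{u}_i=e^{i\eta\phi_i(1)}u(1)+O(\eta^{-1})$.

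For the second step, I would pass to the interaction picture $\psi(t)=e^{i\eta tA}u(t)$, which satisfies $\psi(0)=u_i$ and $i\partial_t\psi=\eta f(t)\widetilde{B}(\eta t)\psi$ with $\widetilde{B}(s)=e^{isA}Be^{-isA}$. Iterating the Duhamel identity $M$ times exactly as in the proof of Lemma~\ref{lem:(Matrix-Exponential-Series)} gives a truncated time-ordered series together with a remainder bounded in operator norm by $\tfrac{(\eta\|B\|)^M}{M!}$, since $\|\widetilde{B}(s)\|=\|B\|$ and $|f|\le 1$. Then $u(1)=e^{-i\eta A}\psi(1)$; using $e^{-i\eta A}u_i=e^{-i\eta\lambda_i}u_i$ and commuting $e^{-i\eta A}$ through the product via the identity $e^{-i\eta A}\widetilde{B}(s)=\widetilde{B}(s-\eta)e^{-i\eta A}$, the scalar phase combines with the factor $e^{i\eta\phi_i(1)}$ from the adiabatic step to produce the overall prefactor $e^{i\eta(\phi_i(1)-\lambda_i)}$, while the two error terms $O(\eta^{-1})$ and $O\!\bigl(\tfrac{(\eta\|B\|)^M}{M!}\bigr)$ add up to the stated remainder.

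The main obstacle, in my view, is the adiabatic step and specifically the control of the derivatives of $e_i(t)$ and $\lambda_i(t)$ uniformly in $t\in[0,1]$: without the spectral-gap assumption one cannot even define the branch smoothly, and without the quantitative bound on the gap one cannot control the constants hidden in the $O(\eta^{-1})$ from Proposition~\ref{prop:Addiab}. Once this is in place, the Dyson expansion and the remainder estimate are a direct reuse of arguments already given in Section~\ref{subsec:Matrix-Exponential-Series}, and the identification of phases is purely algebraic.
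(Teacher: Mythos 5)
Your proposal is correct and follows exactly the paper's route: the proof given there consists precisely of combining Proposition \ref{prop:Addiab} with the time-dependent Dyson expansion of Lemma \ref{lem:(Dyson-Series)} via Remark \ref{rem:time-dependent}, which is what you have fleshed out (adiabatic step, interaction picture, truncated Duhamel iteration with the $\frac{(\eta\|B\|)^{M}}{M!}$ remainder). The minor loose ends in your write-up --- the factor-of-two loss in the Weyl gap estimate under the literal hypothesis $\|B\|\leq\min_{j\neq i}|\lambda_{i}-\lambda_{j}|$, the implicit bound $|f|\leq1$ used in the remainder, and the $\eta$-scaling in the argument of $\widetilde{B}$ after commuting $e^{-i\eta A}$ through --- are imprecisions already present in the statement of the proposition itself rather than defects of your argument.
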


\begin{proof}
We apply Proposition \ref{prop:Addiab} and Lemma \ref{lem:(Dyson-Series)}
with Remark \ref{rem:time-dependent}.
\end{proof}
Then with $A+\epsilon B$ and taking $\eta\rightarrow\infty$ identify
each term in the perturbation series in $\epsilon^{k}$. Because the
remaining term diverges in this regime, Proposition \ref{prop:eig-adiab}
is probably not enough to conclude a rigorous argument, but one could
still state it at a formal level. Finally, we also mention the following
remark for the perturbation of the eigenvalue using this adiabatic
evolution approach.
\begin{rem}
\label{rem:Eigenvalue-adiabatic}With the hypothesis of Proposition
\ref{prop:Addiab} we have 
\begin{align*}
\eta^{-1}i\partial_{t}\log\langle e^{i\eta t\lambda_{i}(0)}e_{i}(0),u(t)\rangle & =\frac{\langle(H(t)-H(0))e^{i\eta tH(0)}u(0),u(t)\rangle}{\langle\langle e^{i\eta t\lambda_{i}(0)}e_{i}(0),u(t)\rangle}\\
 & =\lambda_{i}(t)-\lambda_{i}(0)+O_{\eta\rightarrow\infty}(\eta^{-1})
\end{align*}
\end{rem}

\section{Summary table}

As explained in Section \ref{sec:Dyson-Series}, we could summarise
these notes as follows: in the physics literature, one usually writes
down the solution of the Schrodinger equation $e^{it(A+B)}$ and then
1) gives the perturbation series in $B$ and 2) considers the limit
$t\rightarrow\infty$. Here we first 1) consider the limit $t\rightarrow\infty$
and then 2) give the perturbation series and we hope, at least for
simple cases, that the two approaches are equivalent. We wrote our
section motivated by what we read in different parts of \cite{michel2022quantum}
and below is the correspondence table. We add the book \cite{peskin2018introduction}
as our main physics reference.
\begin{center}
\begin{tabular}{|>{\centering}p{4cm}|>{\centering}p{5cm}|c|}
\hline 
 & In the books \cite{michel2022quantum,peskin2018introduction} & In these notes\tabularnewline
\hline 
\hline 
(Dyson Series) & \cite[Section 11.2]{michel2022quantum}, \\
\cite[Section 4.2]{peskin2018introduction} & Section \ref{sec:Dyson-Series}\tabularnewline
\hline 
(Scattering Matrix) & \cite[Section 12]{michel2022quantum} & Section \ref{sec:Scattering-Matrix}\tabularnewline
\hline 
(Diagrams and Feynman rules) & \cite[Section 13.13]{michel2022quantum}, \\
\cite[Section 4.4]{peskin2018introduction} & Section \ref{sec:Symmetry}, \ref{sec:Diagrams}, \ref{sec:Integral-formula}\tabularnewline
\hline 
(Change of parameters) & \cite[Section 13.17, 13.20,17]{michel2022quantum} & Section \ref{sec:Change-of-parametrisation}\tabularnewline
\hline 
(Eigenvector correction) & \cite[Section 14]{michel2022quantum}, \\
\cite[Section 7]{peskin2018introduction} & Section \ref{sec:Eigenvector-perturbation}\tabularnewline
\hline 
\end{tabular}
\par\end{center}

In particular in Section \ref{sec:Eigenvector-perturbation}, we state
many relations which may seem uncorrelated one to another, but each
one of them was motivated by a result that appears in \cite{michel2022quantum,peskin2018introduction}.
Our goal was to derive, at least in simple cases, similarly looking
formula without using ``diagrammatic computation''. Here again is
the correspondence table.
\begin{center}
\begin{tabular}{|c|c|}
\hline 
In the books \cite{michel2022quantum,peskin2018introduction} & In these notes\tabularnewline
\hline 
\hline 
\cite[Equation 14.101]{michel2022quantum}, \cite[Equation 7.24]{peskin2018introduction} & Proposition \ref{lem:eigenvector-formula}\tabularnewline
\hline 
\cite[Equation 14.79]{michel2022quantum}, \cite[Equation 7.44]{peskin2018introduction} & Equation (\ref{eq:Spectrum})\tabularnewline
\hline 
\cite[Theorem 14.9.7]{michel2022quantum}, \cite[Equation 7.23, 7.43]{peskin2018introduction} & Equation (\ref{eq:Schur})\tabularnewline
\hline 
\cite[Equation 14.100]{michel2022quantum}, \cite[Equation 7.26]{peskin2018introduction} & Equation (\ref{eq:normalization})\tabularnewline
\hline 
\cite[Equation 14.18]{michel2022quantum}, \cite[Equation 4.31]{peskin2018introduction} & Equation (\ref{eq:C-eigenvector})\tabularnewline
\hline 
\cite[end of Section 7.4]{peskin2018introduction} & Remark \ref{rem:Naive-Cancelation}\tabularnewline
\hline 
\cite[Equation 14.41]{michel2022quantum} & Remark \ref{rem:Eigenvalue-adiabatic}\tabularnewline
\hline 
\end{tabular}
\par\end{center}

\appendix

\section{Appendix}

We just check that some well known results can, indeed, be derived
by formulas stated in these notes.

\subsection{Example of eigenvalue perturbation}

\begin{example}
\label{exa:lambda-4}(Application of Proposition \ref{prop:eigenvalue_perturbation})
We denote $\sum_{j}^{*}$ for $\sum_{j\neq i}$ and we have 
\begin{align*}
\lambda_{i}^{(4)} & =\frac{1}{8i\pi}\oint_{{\cal C}}\frac{1}{z-\lambda_{i}}\sum_{j,k,l}^{*}\frac{4B_{ij}B_{jk}B_{kl}B_{li}}{(z-\lambda_{j})(z-\lambda_{k})(z-\lambda_{l})}dz\\
 & +\frac{1}{8i\pi}\oint_{{\cal C}}\frac{1}{(z-\lambda_{i})^{2}}\sum_{j,k}^{*}\frac{4B_{ij}B_{jk}B_{ki}B_{ii}+2B_{ij}^{2}B_{ik}^{2}}{(z-\lambda_{j})(z-\lambda_{k})}dz\\
 & +\frac{1}{8i\pi}\oint_{{\cal C}}\frac{1}{(z-\lambda_{i})^{3}}\sum_{j}^{*}\frac{4B_{ij}^{2}B_{ii}^{2}}{(z-\lambda_{j})}dz+\frac{1}{8i\pi}\oint_{{\cal C}}\frac{B_{ii}^{4}}{(z-\lambda_{i})^{4}}dz\\
 & =\sum_{j,k,l}^{*}\frac{B_{ij}B_{jk}B_{kl}B_{li}}{(\lambda_{i}-\lambda_{j})(\lambda_{i}-\lambda_{k})(\lambda_{i}-\lambda_{l})}-\sum_{j,k}^{*}\frac{2B_{ij}B_{jk}B_{ki}B_{ii}+B_{ij}^{2}B_{ik}^{2}}{(\lambda_{i}-\lambda_{j})^{2}(\lambda_{i}-\lambda_{k})}+\sum_{j}^{*}\frac{B_{ij}^{2}B_{ii}^{2}}{(z-\lambda_{j})^{3}}
\end{align*}
where we used that $\frac{1}{2i\pi}\oint_{{\cal C}}\frac{1}{(z-\lambda_{i})^{m+1}}f(z)dz=\frac{1}{m!}f^{(m)}(\lambda_{i})$
for any holomorphic function $f$.
\end{example}

\subsection{Examples of diagram computations}

\label{subsec:Examples-of-diagram}

\subsubsection{Born Approximation}

\label{subsec:Born-Approximation}In a finite domain $\Omega\subset\mathbb{R}^{n},$
we consider $P=(i\partial_{1},i\partial_{2},\cdots)$ the derivation
operators with $p=|\Omega|^{-1/2}e^{ip.x}$ its eigenvectors and $X=(x_{1},x_{2},\cdots)$
the position operators. 
\begin{prop}
(Born Approximation) Let $F,V$ two ``nice enough'' functions on
$\mathbb{R}^{n}.$ For $A=F(p)$ and $B=V(X)$, the first order of
the $S$ matrix is given by 
\[
S_{p,q}^{(1)}(\tau)=\frac{i\tau}{\frac{1}{4}(F(p)-F(q))^{2}+\tau^{2}}\frac{\widehat{V}(p-q)}{|\Omega|}.
\]
\end{prop}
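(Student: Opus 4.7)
The plan is essentially a direct substitution into the general first-order formula already derived in Section \ref{sec:Scattering-Matrix}. Recall from the explicit list of low-order terms following Proposition \ref{prop:scattering_series} that for any diagonal $A$ with eigenvalues $\lambda_i$ and any perturbation $B$,
\[
S_{ij}^{(1)}(\tau) = \frac{i\tau}{\tfrac{1}{4}(\lambda_i-\lambda_j)^2+\tau^2}\,\langle v_i, B\, v_j\rangle.
\]
In our setting $A = F(P)$ is diagonalized in the plane-wave basis $|p\rangle = |\Omega|^{-1/2} e^{ip\cdot x}$ by functional calculus, with eigenvalues $F(p)$. So with $i \leadsto p$, $j \leadsto q$, the denominator already matches the right-hand side of the claim, and it remains only to compute the matrix element of $B = V(X)$.

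For that computation, one uses that $V(X)$ acts on wavefunctions by pointwise multiplication by $V(x)$, so
\[
\langle p, V(X)\, q\rangle = \int_\Omega |\Omega|^{-1/2} e^{-ip\cdot x}\, V(x)\, |\Omega|^{-1/2} e^{iq\cdot x}\, dx = \frac{1}{|\Omega|}\int_\Omega V(x) e^{-i(p-q)\cdot x}\, dx = \frac{\widehat{V}(p-q)}{|\Omega|},
\]
the last equality being the Fourier-transform convention implicit in the statement. Plugging back into the first-order formula yields the claimed identity.

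There is essentially no obstacle here: the proposition is a one-line specialization of Proposition \ref{prop:scattering_series} at order $\ell=1$, combined with a standard plane-wave integral. The only subtlety is hidden in the phrase \emph{nice enough}: one should at least require $V \in L^1(\Omega)$ so that $\widehat{V}$ is defined, and enough regularity on $F$ so that $F(P)$ is self-adjoint and the smallness condition $\|(A-\lambda_\tau)^{-1/2} B (A-\lambda_\tau)^{-1/2}\| < 1$ needed to apply Lemma \ref{lem:resolvent-series} holds at the relevant order. These are verification tasks rather than genuine difficulties.
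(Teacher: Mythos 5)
Your proposal is correct and follows exactly the paper's own route: specialize the explicit first-order term of Proposition \ref{prop:scattering_series} (equivalently (\ref{eq:Scattering_term}) at $\ell=1$) to $A=F(P)$ diagonal in the plane-wave basis, and compute $\langle p, V(X) q\rangle = \widehat{V}(p-q)/|\Omega|$ as a plane-wave integral. The only deviation is the sign convention in the Fourier exponent, which is immaterial given the convention implicit in the statement.
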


\begin{proof}
It directly follows from (\ref{eq:Scattering_term}) and 
\[
\langle q,V(X)p\rangle=\frac{1}{|\Omega|}\int_{\Omega}V(x)e^{i(p-q)x}dx=\frac{1}{|\Omega|}\widehat{V}(p-q).
\]
\end{proof}
\begin{example}
(Rutherford's Scattering) In the case of $V(X)=\frac{Z}{4\pi|X|}$
in $\mathbb{R}^{3}$ so that $\widehat{V}(p)=\frac{Z}{p^{2}}$. For
$p_{0},q_{0}\in\mathbb{R}^{3}$,
\begin{align*}
\sum_{|q-q_{0}|^{2}\leq\epsilon}|S_{p_{0},q}^{(1)}(\tau)|^{2} & =\frac{1}{|\Omega|}\sum_{|q-q_{0}|\leq\epsilon}\left(\frac{i\tau}{\frac{1}{4}(F(p)-F(q))^{2}+\tau^{2}}\frac{Z}{|p-q|}\right)^{2}\\
 & \sim\frac{\epsilon^{2}}{\tau}\frac{1_{F(p_{0})=F(q_{0})}Z^{2}}{F'(p_{0})|p_{0}-q_{0}|^{2}}.
\end{align*}
\end{example}

\subsubsection{A three-particles system example}

Here we check that our formula (\ref{eq:Scattering_term}) for the
second order gives the same result as \cite[Equation 13.84]{michel2022quantum}.
Let $i=|p_{1}^{(a)},p_{2}^{(b)}\rangle$, $j=|p_{3}^{(a)},p_{4}^{(b)}\rangle$
where we can assume that $\omega_{p_{1}^{(a)}}+\omega_{p_{2}^{(b)}}=\omega_{p_{3}^{(a)}}+\omega_{p_{4}^{(b)}}=\lim_{\tau\rightarrow0}\lambda_{\tau}$
and $p_{1}+p_{2}=p_{3}+p_{4}$
\[
S_{ij}^{(2)}=-i2\pi\lim_{\tau\rightarrow\infty}\sum_{k}\frac{B_{ik}B_{kj}}{\lambda_{k}-\lambda_{\tau}}
\]
 There only a few non zero $B_{ik}B_{kj}$ terms that can be associated
with 4 Feynman diagrams in \cite[ (a),(b),(c),(d) p375-376]{michel2022quantum}.
\begin{center}
\begin{tabular}{|c|c|c|c|}
\hline 
Diagram & $k$ & $B_{ik}B_{kj}$ & $\lambda_{k}-\lambda_{\tau}$\tabularnewline
\hline 
\hline 
(a) & $|p^{(c)}\rangle$ & $\frac{1}{\omega_{p^{(c)}}}$ & $\omega_{p^{(c)}}-\lambda_{\tau}$\tabularnewline
\hline 
(b) & $|p_{1}^{(a)},p_{2}^{(b)},p^{(c)},p_{3}^{(a)},p_{4}^{(b)}\rangle$ & $\frac{1}{\omega_{p^{(c)}}}$ & $\omega_{p^{(c)}}+\overline{\lambda_{\tau}}$\tabularnewline
\hline 
(c) & $|p_{2}^{(b)},p'^{(c)},p_{4}^{(b)}\rangle$ & $\frac{1}{\omega_{p'^{(c)}}}$ & $\omega_{p'^{(c)}}+\omega_{p_{4}^{(b)}}-\omega_{p_{1}^{(a)}}-i\tau$\tabularnewline
\hline 
(d) & $|p_{1}^{(a)},p'^{(c)},p_{3}^{(a)}\rangle$ & $\frac{1}{\omega_{p'^{(c)}}}$ & $\omega_{p'^{(c)}}-\omega_{p_{4}^{(b)}}+\omega_{p_{1}^{(a)}}-i\tau$\tabularnewline
\hline 
\end{tabular}
\par\end{center}

Here we denote $p=p_{1}+p_{2}$ and $p'=p_{1}-p_{4}$. Then we have
\begin{align*}
S_{ij}^{(2)} & =-i2\pi\Big(\frac{1}{\omega_{p^{(c)}}}\left(\frac{1}{\omega_{p^{(c)}}-\lambda_{\tau=0}}+\frac{1}{\omega_{p^{(c)}}+\lambda_{\tau=0}}\right)\\
 & +\frac{1}{\omega_{p'^{(c)}}}\left(\frac{1}{\omega_{p'^{(c)}}+\omega_{p_{4}^{(b)}}-\omega_{p_{1}^{(a)}}}+\frac{1}{\omega_{p'^{(c)}}-\omega_{p_{4}^{(b)}}+\omega_{p_{1}^{(a)}}}\right)\Big)\\
 & =-i2\pi\left(\frac{1}{\omega_{p^{(c)}}^{2}-\lambda_{\tau=0}^{2}}+\frac{1}{\omega_{p'^{(c)}}^{2}+(\omega_{p_{4}^{(b)}}-\omega_{p_{1}^{(a)}})^{2}}\right).
\end{align*}
and with $(\omega_{p^{(c)}})^{2}=m_{c}^{2}+(p_{1}+p_{2})^{2}$ and
$(\omega_{p'^{(c)}})^{2}=m_{c}^{2}+(p_{1}-p_{4})^{2}$, we recover
\cite[Equation 13.84]{michel2022quantum}.

\bibliographystyle{alpha}
\bibliography{Biblio_Perturbation}

\end{document}